\documentclass[submission,copyright,creativecommons]{eptcs}
\usepackage{pstricks,pst-node,pst-tree}
\usepackage{amssymb}
\usepackage{amsmath}
\usepackage{graphicx}
\usepackage{amsthm}
\usepackage{ifthen}

\newtheorem{theorem}{Theorem}[section]
\newtheorem{prop}[theorem]{Proposition}
\newtheorem{lemma}[theorem]{Lemma}
\theoremstyle{definition}
\newtheorem{definition}[theorem]{Definition}
\newtheorem{example}[theorem]{Example}


\newcommand{\ign}[1]{}
\newcommand{\Aa}{{\cal A}}

\newcommand{\Dd}{{\cal D}}

\newcommand{\Pp}{{\cal P}}

\newcommand{\Tt}{{\cal T}}
\newcommand{\DV}{\mathbb{D}}
\newcommand{\idv}{\mathsf{idv}}
\newcommand{\nodata}{\mathsf{no\textrm{-}data}}
\newcommand{\type}{\mathsf{type}}
\newcommand{\origin}{\mathsf{src}}
\newcommand{\pop}[1]{\mathsf{pop}^{#1}}
\newcommand{\push}[1]{\mathsf{push}^{#1}}
\newcommand{\collapse}[1]{\mathsf{collapse}^{#1}}

\newcommand{\lab}[1]{\label{#1}} 
\newcommand{\subrun}[3]{#1{\restriction}_{#2,#3}}
\newcommand{\Rz}[1]{R\!\langle #1\rangle}
\newcommand{\Iff}{\Leftrightarrow}
\newcommand{\przyp}[1]{{\it #1}}

\title{Higher-Order Pushdown Systems with Data}
\author{Pawe\l\ Parys\thanks{Work partially supported by the Polish Ministry of Science grant nr N N206 567840.}
\institute{University of Warsaw\\
Warszawa, Poland}
\email{parys@mimuw.edu.pl}}

\begin{document}

\maketitle

\begin{abstract}
	We propose a new extension of higher-order pushdown automata, which allows to use an infinite alphabet.
	The new automata recognize languages of data words (instead of normal words), which beside each its letter from a finite alphabet have a data value from an infinite alphabet.
	Those data values can be loaded to the stack of the automaton, and later compared with some farther data values on the input.
	Our main purpose for introducing these automata is that they may help in analyzing normal automata (without data).
	As an example, we give a proof that deterministic automata with collapse can recognize more languages than deterministic automata without collapse.
	This proof is simpler than in the no-data case.
	We also state a hypothesis how the new automaton model can be related to the original model of higher-order pushdown automata.
\end{abstract}

\section{Introduction}

Already in the 70's, Maslov (\cite{Mas74,Mas76}) generalized the concept of pushdown automata to higher-order pushdown automata 
and studied such devices as acceptors of string languages.
In the last decade, renewed interest in these automata has arisen. 
They are now studied also as generators of graphs and trees.
Knapik et al.\ \cite{easy-trees} showed that the class of trees generated by deterministic level-$n$ pushdown automata coincides 
with the class of trees generated by \emph{safe} level-$n$ recursion schemes (safety is a  syntactic restriction on the recursion scheme),
and Caucal \cite{Caucal02} gave another characterization: trees on level $n+1$ are obtained from trees on level $n$
by an MSO-interpretation of a graph, followed by application of unfolding.
Carayol and W\"ohrle \cite{cawo03} studied the $\varepsilon$-closures of configuration graphs of level-$n$ pushdown automata
and proved that these graphs are exactly the graphs in the $n$-th level of the Caucal hierarchy.
In order to deal with non-safe schemes,
Hague et al.~\cite{collapsible} extended the model of level-$n$ pushdown automata to level-$n$ collapsible pushdown automata by
introducing a new stack operation called collapse. 
They showed that the trees generated by such automata coincide exactly 
with the class of trees generated by all higher-order recursion schemes and this correspondence is level-by-level. 
Nevertheless, for a long time it was an open question whether safety implies a semantical restriction to recursion schemes.
It was proven recently \cite{parys-panic-new}, and the proof is technically rather complicated.

We extend the definition of higher-order pushdown automata by allowing infinite alphabets, both on input and on stack.
More precisely, every symbol consists of two parts: a label from a finite alphabet, and a data value from an infinite alphabet.
The part from the finite alphabet is treated as normally.
On the other hand all the data values are assumed to be equivalent; the automaton can only test whether two data values are equal 
(data values from the input can be stored on the stack, and afterwards compared with some future data values on the input).
Words over such a two-part alphabet are called data words.
We consider both collapsible and non-collapsible variant of the automata.

Data words are widely considered, for example in the context of XML processing.
In XML documents we have tag names and attribute names (which are from a finite alphabet),
but we also have a text data; typically we don't want to analyze the internal structure of such a text, we just want to compare texts appearing indifferent places in the document 
(or to compare them with some constants).
Several automata models and logics were already considered for data words, see \cite{SegoufinData} for a survey. 
Some of the automata models are generalizations of finite automata to data words \cite{KaminskiF94}, other \cite{data_aut, class_aut} are designed to cover some logics, like FO$^2$ or XPath.
A model of pushdown automata (of level $1$) using an infinite alphabet (very similar to our model) was considered in \cite{ChengK98}.
(Historically the first models of pushdown automata over infinite alphabet were presented in \cite{autebert} and \cite{Idt84};
they simply allow infinitely many rules and infinitely many transition functions, one for each letter in the alphabet).
We are not aware of any works which consider infinite alphabets for higher-order pushdown automata.

The main technical content of this paper is a proof that deterministic collapsible pushdown automata with data recognize more languages than non-collapsible ones, as stated below.

\begin{theorem}\lab{thm:main}
	There exists a data language recognized by a deterministic collapsible pushdown automaton with data of the second level, 
	which is not recognized by any deterministic higher-order pushdown automaton with data (without collapse) of any level.
\end{theorem}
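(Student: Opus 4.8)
The plan is to exhibit an explicit data language $L$ whose definition mirrors the collapse structure of a second-level stack, and then to prove the two halves of the statement separately. I would design $L$ so that a correct word splits into a \emph{writing phase} and a \emph{checking phase}: in the writing phase the input carries a sequence of data values together with finite-alphabet markers that dictate when a copy of the current first-level stack is to be taken, and in the checking phase the input demands a sequence of data values that must coincide, one by one, with the values stored at the positions reachable by following collapse links. The point is that the required matching between checking values and writing values is exactly the one realised by the collapse operation of a level-$2$ automaton, and is \emph{not} a last-in-first-out matching, so it cannot be reproduced by the ordinary push/pop discipline alone.

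For the positive direction I would build a deterministic level-$2$ collapsible automaton $\Aa$ recognizing $L$. During the writing phase $\Aa$ stores each incoming data value with a $\push{1}$ and performs a $\push{2}$ whenever the marker requires it, so that each first-level symbol carries a collapse link to the configuration current at the moment it was created. During the checking phase $\Aa$ reads a data value, recovers the linked value by a $\collapse{2}$ (possibly after inspecting the top symbol), compares the two values using the equality test, and rejects on the first mismatch; it accepts iff all comparisons succeed and the input ends correctly. Verifying that this automaton is deterministic and recognizes exactly $L$ is a routine bookkeeping argument.

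The substance of the proof, and its main obstacle, is the negative direction: no deterministic higher-order pushdown automaton with data and \emph{without} collapse recognizes $L$, at any level. Suppose some level-$n$ automaton $\Bb$ did. The key structural fact I would exploit is that, in the absence of collapse, the only data values $\Bb$ can ever compare against a future input value are those physically present in its current configuration, that is, stored somewhere in the higher-order stack or held in the finite control; and the set of such accessible values evolves under the (higher-order) push and pop operations in a strictly nested, last-in-first-out fashion at every level. Using this, I would run a pumping argument on the writing phase: for sufficiently long writing phases one can find two distinct assignments of data values that drive $\Bb$ into configurations agreeing on all the finitely many values the control can still distinguish, yet differing on some value that $L$ will later require to be checked. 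Feeding both continuations through the checking phase then forces $\Bb$ to give the same answer on a word in $L$ and a word not in $L$, a contradiction.

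The technically delicate part is isolating the right invariant for ``which writing-phase data values remain accessible'' and bounding their number by a quantity that grows too slowly, relative to the stack height, to meet the number of independent comparisons that $L$ imposes. Here the data values pay off: rather than having to control the full combinatorial shape of an $n$-fold nested stack, as one must in the no-data separation of \cite{parys-panic-new}, it suffices to track the accessibility of finitely many explicitly marked values and to exhibit two concrete inputs that $\Bb$ cannot tell apart. This is what I expect to make the argument with data noticeably simpler than its no-data counterpart.
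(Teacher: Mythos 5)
Your positive direction is fine and matches the paper's construction in spirit (store each value in a fresh copy of the level-1 stack, let collapse jump to the position singled out by the language). The gap is in the negative direction, and it sits exactly where you locate ``the technically delicate part'' --- but the one structural claim you do make there is false, and it is the claim your whole pumping argument rests on. You assert that, without collapse, the set of data values accessible to the automaton ``evolves under the (higher-order) push and pop operations in a strictly nested, last-in-first-out fashion at every level.'' This is not true for higher-order stacks: a $\push{k}$ with $k\geq 2$ duplicates an entire $(k-1)$-stack, so a single data value afterwards occurs in many copies at many depths, and after later pops an arbitrarily old occurrence can resurface at the top. The retrieval order of stored values is therefore very far from LIFO, and a language whose matching is merely ``not last-in-first-out'' can perfectly well be recognized without collapse. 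This is precisely why the paper cannot argue as you propose and instead builds the machinery of returns, upper runs, and stack \emph{types}: it proves that a $k$-stack can be accessed in only boundedly many ways (run descriptors in $\Tt^k$), attaches to each way a fixed set $\idv$ of recoverable data values, and proves a transfer lemma (its Lemma on $k$-upper runs) showing that two values lying in exactly the same $\idv$ sets stay indistinguishable along the rest of the run.

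Concretely, two things are missing that cannot be supplied by routine work. First, your pigeonhole is over the wrong quantity: you propose to collapse ``the finitely many values the control can still distinguish,'' but the control is finite while the stack holds unboundedly many accessible values; the correct pigeonhole (in the paper) is over subsets of run descriptors --- it takes $N=2+2^{M}$ with $M=\sum_i|\Tt^i|$ repetitions to force two indices $x<y$ whose witness values $d_x,d_y$ lie in exactly the same sets $\idv(s^i,\tau)$, and only then can the accepting runs after the dollar be shown identical on both. Second, the argument must be set up as an induction on the stack level, which dictates the nested shape of the input (the paper's $w_{k+1}=w_k^N]^N[$ ) and the key lemma that in every block some value read during the block escapes the topmost $k$-stack; without this inductive structure there is no handle on what a level-$n$ automaton does with its outer stack levels. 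Your proposal acknowledges the need for an invariant about ``which writing-phase values remain accessible'' but offers in its place an invariant that higher-order copying violates, so the proof as sketched would fail at the first $\push{2}$ the adversary automaton performs.
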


The proof of this theorem is extracted from the analogous proof for automata without data, contained in \cite{parys-panic-new}.
Although we consider a more complicated model of automata, the proof is in fact simpler.
The reason is that when having data values, we can precisely trace where this data value is stored in the stack.
We just need to analyze how this places are related for different data values, hence in which order the data values can be recovered from the stack.
The idea for automata without data is similar, but its realization is much more complicated.
Instead of considering data values, we consider blocks containing the same letter repeated very many times.
We look in which place of the stack the automaton ``stores'' the number of these repetitions.
However now this place cannot be easily identified.
Moreover, it is difficult to change the number of repetitions in one place, so that the whole run stays similar.
This problems disappear when we have data values.
Thus our paper can be rather seen as an illustration of the original proof.

We also state a hypothesis how the new classes of automata are related to the original one.
Namely, let us consider the (already mentioned) encoding of data words into normal words, in which a data value is represented as a number of repetitions of a symbol.
Of course every data word has many encodings, as a data value may be mapped to an arbitrary number of repetitions.
Assume that we have an automaton (for normal words) which either accepts all representations of a data word, or none of them.
Our hypothesis would be that then we can create an automaton with data, which recognizes the corresponding language of data words.
One can easily see that the opposite implication holds: an automaton with data can be converted to a normal automaton recognizing all representations of data words in the language
(the number of repetitions of a symbol on the input is stored on the stack as a number of repetitions of some stack symbol).

If this hypothesis would be proved to be true, our Theorem \ref{thm:main} would imply the original theorem of \cite{parys-panic-new} 
that collapsible pushdown automata (without data) recognize more languages than non-collapsible ones
(the representation of the data word language from Theorem \ref{thm:main} would give a correct example language for normal words).

A next goal would be to prove analogous result for nondeterministic automata, i.e.~that nondeterministic collapsible pushdown automata with data recognize more languages than non-collapsible ones.
Notice that this is not the case for level $2$ \cite{AehligMO05}, but is believed to be true for higher levels.
It looks likely that this can be shown for automata with data, much easier than for automata without data.

\paragraph*{Acknowledgment.}
We thank M. Boja\'nczyk for proposing the idea of introducing data values to higher-order pushdown automata.

\section{Preliminaries}

In this section we define higher-order pushdown automata with data.

Let us fix any infinite set of data values $\DV$, e.g.~$\DV=\mathbb{N}$.
For a finite alphabet $A$, a \emph{data word} over $A$ is just a word over $A\times\DV$.
For a data word $w$, by $\pi(w)$ we denote the (normal) word over $A$ obtained from $w$ by projecting every letter to its first coordinate.

For any alphabet $\Gamma$ (of stack symbols) we define a \emph{$k$-th level stack} (a \emph{$k$-stack} for short) 
as an element of the following set $\Gamma^k_*$:
\begin{align*}
	&\Gamma^0_*=\Gamma\times(\DV\cup\{\nodata\}),\\
	&\Gamma^k_*=(\Gamma^{k-1}_*)^*\qquad\mbox{for }1\leq k\leq n.
\end{align*}
In other words, a $0$-stack contains a symbol from $\Gamma$ and possibly a data value (or $\nodata$ if no data value is stored), and a $k$-stack for $1\leq k\leq n$ is a (possibly empty) sequence of $(k-1)$-stacks.
Top of a stack is on the right.
The \emph{size} of a $k$-stack is just the number of $(k-1)$-stacks it contains.
For any $s^k\in\Gamma^k_*$ and $s^{k-1}\in\Gamma^{k-1}_*$ we write $s^k:s^{k-1}$ for the $k$-stack obtained from $s^k$ by placing $s^{k-1}$ at its end;
the convention is that $s^2:s^1:s^0=s^2:(s^1:s^0)$.
We say that an $n$-stack $s$ is \emph{well formed} if every $k$-stack in $s$ is nonempty, for $1\leq k\leq n$ (including the whole $s$).
In configurations of our automata we allow only well formed stacks.

An $n$-th level \emph{deterministic higher-order pushdown automaton with data} ($n$-HOPAD for short) is a tuple $\Aa=(A,\Gamma,\gamma_I,Q,q_I,F,\Delta)$ 
where $A$ is an input alphabet, $\Gamma$ is a stack alphabet, $\gamma_I\in\Gamma$ is an initial stack symbol, 
$Q$ is a set of states, $q_I\in Q$ is an initial state, $F\subseteq Q$ is a set of accepting states,
and $\Delta\subseteq Q\times \Gamma\times(A\cup\{\varepsilon\})\times Q\times OP^n_\Gamma$ is a transition relation, where $OP^n_\Gamma$ contains the following operations:
\begin{itemize}
\item	$\mathsf{pop}^k$, where $1\leq k\leq n$, and 
\item	$\mathsf{push}^k_\alpha$, where $1\leq k\leq n$, and $\alpha\in\Gamma$.
\end{itemize}
The transition relation $\Delta$ should be deterministic, which means that it is partial function from $Q\times \Gamma\times(A\cup\{\varepsilon\})$ to $Q\times OP^n_\Gamma$,
which moreover is never simultaneously defined for some $(q,\alpha,a)\in Q\times\Gamma\times A$ and for $(q,\alpha,\varepsilon)$.
The letter $n$ is used exclusively for the level of pushdown automata.

A \emph{configuration} of $\Aa$ consists of a state and of a well formed stack, 
i.e.~is an element of $Q\times\Gamma_*^n$ in which the $n$-stack is well formed.
The \emph{initial} configuration consists of the initial state $q_I$ and of the $n$-stack containing only one $0$-stack, 
which is $(\gamma_I,\nodata)$.

Next, we define when $c\vdash^{(a,d)} c'$, where $c,c'$ are configurations, and $(a,d)\in (A\times\DV)\cup\{(\varepsilon,\nodata)\}$.
Let $p$ be the state of $c$, $q$ the state of $c'$, and let $(\alpha,d')$ be the topmost $0$-stack of $c$.
We have $c\vdash^{(a,d)} c'$ if
\begin{itemize}
\item	$(p,\alpha,a,q,\pop k)\in\Delta$, and the stack of $c'$ is obtained from the stack of $c$ by replacing its topmost $k$-stack $s^k:s^{k-1}$ by $s^k$ (i.e.~we remove the topmost $(k-1)$-stack),
	and either $d=\nodata$ or $d=d'$, or
\item	$(p,\alpha,a,q,\push k_\beta)$, and the stack of $c'$ is obtained from the stack of $c$ by replacing its topmost $k$-stack $s^k:s^{k-1}$ by $(s^k:s^{k-1}):s^{k-1}$,
	and then by replacing its topmost $0$-stack by $(\beta,d)$
	(i.e.~we copy the topmost $(k-1)$-stack, and then we change the topmost symbol in the copy).\footnote{
	 	In the classical definition the topmost symbol can be changed only when $k=1$ (for $k\geq 2$ it has to be $s^0=t^0$).
	 	We make this extension to have an unified definition of $\mathsf{push}^k$ for every $k$.
	 	However it is easy to simulate an automaton using one definition by an automaton using the other in such a way that the same language is recognized.
	 }
\end{itemize}
Notice that we require that the stack of $d$ is well-formed, so a $\pop k$ operation cannot be executed if the topmost $k$-stack has size $1$.
Observe that every $\push{}$ operation sets the data value in the topmost $0$-stack to the just read data value, or to $\nodata$ if performing an $\varepsilon$-transition.\footnote{
	In nondeterministic automata we should allow also $\varepsilon$-transitions performing a $\push{}$ operation which put an arbitrary (guessed) data value in the topmost $0$-stack. 
	This is of course forbidden in deterministic automata. 
}
On the other hand a $\pop{}$ operation can read a data value only if it is the same as the data value in the topmost $0$-stack;
however $0$-stacks with data values can be also removed without reading anything.

A \emph{run} $R$, from $c_0$ to $c_m$, is just a sequence $c_0\vdash^{x_1}c_1\vdash^{x_2}\dots\vdash^{x_m}c_m$. 
We denote $R(i):=c_i$ and call $|R|:=m$ the \emph{length of $R$}. 
The \emph{subrun} $\subrun{R}{i}{j}$ is $c_i\vdash^{x_{i+1}}c_{i+1}\vdash^{x_{i+2}}\dots\vdash^{x_j}c_j$.
If a run $R$ ends in the first configuration of a run $S$, we denote by $R\circ S$ the \emph{composition} of $R$ and $S$ which is defined as expected. 
The data word \emph{read} by $R$ is the data word over $A$ obtained from $x_1x_2\dots x_m$ by dropping all appearances of $(\varepsilon,\nodata)$.
A data word $w$ is \emph{accepted} by $\Aa$ if it is read by some run from the initial configuration to a configuration having an accepting state.

\paragraph*{Collapsible $n$-HOPAD.}
In Section \ref{sec:lang} we also use deterministic collapsible pushdown automata ($n$-CPAD for short).
Such automata are defined like $n$-HOPAD, with the following differences.
A $0$-stack contains now three parts: a symbol from $\Gamma$, a data value from $\DV$, and $n$ natural numbers.
The natural numbers are not used to determine which transition can be performed from a configuration.
Every $\push{}$ operation stores in the topmost $0$-stack a tuple of $n$ numbers $(k_1,k_2,\dots,k_n)$, where $k_i$ is the size of the topmost $i$-stack after this operation.
We have a new operation $\collapse i$ for $1\leq i\leq n$.
When it is performed between configurations $c$ and $c'$, then the stack of $c'$ is obtained from the stack of $c$ by removing its topmost $(i-1)$-stacks, so that only $k_i-1$ of them are left, where
$k_i$ is the $i$-th number stored in the topmost $0$-stack of $c$ (intuitively, from the topmost $i$-stack we remove all $(i-1)$-stacks on which the topmost $0$-stack is present).\footnote{
	Again, a classical definition of the links stored and of $\collapse{}$ differs slightly, but this does not change the class of recognized languages.
}

\section{The Separating Language}\lab{sec:lang}

In this section we define a data language $U$ which can be recognized by a $2$-CPAD, but not by any $n$-HOPAD, for any $n$.
It is a data language over the alphabet $A=\{[,],\$\}$.
The language $U$ contains the data words $w$ over the alphabet $A=\{[,],\$\}$ which satisfy the following conditions:
\begin{itemize}
\item	$w$ contains exactly one dollar, and
\item	$w$ after removing the dollar is a well-formed bracket expression, i.e.~in each prefix of $w$ the number of closing brackets is not greater than the number of opening brackets, and in the whole $w$ these numbers are equal, and
\item	the suffix of $w$ after the dollar is symmetric to some prefix of $w$ (i.e.~the $i$-th letter from the beginning and the $i$-th letter from the end have the same data value, and one of them is $]$ and the other $[$).
\end{itemize}
Notice that words in the language can be divided into three parts, the last of them preceded by a dollar.
The first part is completely symmetric to the third part, while the second part is an arbitrary bracket expression.
The first part ends on the last opening bracket which is not closed before the dollar (or this part is empty if all brackets are closed).

Let us now show briefly how $U$ can be recognized by a $2$-CPAD.
The $2$-CPAD will use 4 stack symbols: $[$, $]$, $X$ (used to mark the bottom of $1$-stacks), $Y$ (used to count open brackets).
The initial symbol is $X$.
The automaton first performs $\mathsf{push}^2(X)$.
Then, for every bracket $a$ on the input,
\begin{itemize}
\item	we perform $\push1(a)$ (we place the bracket on the stack, together with the data value),
\item	we perform $\push2$ and $\pop1$,
\item	if $a=[$ we perform $\push1(Y)$, otherwise $\pop 1$.
\end{itemize}
When at some moment the topmost symbol is $X$ and a closing bracket is read, the word is rejected (the number of closing brackets till now is greater than the number of opening brackets).
After reading a word of length $k$, we have $k+2$ $1$-stacks.
For $1\leq i\leq k$, the topmost symbol of the $(i+1)$-th $1$-stack is equal to the $i$-th symbol on the input (including the data value);
additionally the number of the $Y$ symbols on the $(i+2)$-th $1$-stack is equal to the number of opening brackets minus the number of closing brackets among the first $i$ letters of the input.

Finally the dollar is read.
If the topmost symbol is $X$, we have read as many opening brackets as closing brackets, hence we should accept immediately.
Otherwise, the topmost $Y$ symbol corresponds to the last opening bracket which is not closed.
We execute the collapse operation.
It removes the $1$-stacks containing all the brackets read after this bracket.
Now we repeat $\pop2$ until only the first $1$-stack is left (which has $X$ as its topmost symbol),
and during each $\pop2$ we read appropriate bracket from the input (the automaton ensures that the data value on the input is the same as the data value on the stack, hence the same as in the first part of the input).

In the remaining part of the paper we prove that any $n$-HOPAD cannot recognize $U$;
in particular all automata appearing in the following sections does not use collapse.

\paragraph*{A proof for level $1$.}
In order to show intuitions how the proof of Theorem \ref{thm:main} works, we present it first for level $1$.
Thus consider a $1$-HOPAD $\Aa$ which recognizes $U$.
We give him a data word $w$ with $\pi(w)=[^N]^N$ and with every data value different, where $N$ is a big enough number (namely, a number greater than the number of states).

Notice that the unique run of $\Aa$ on $w$ ($\Aa$ is deterministic) stores all data values of $w$ on the stack, and none of them can be removed from the stack until the end of $w$ is reached.
Indeed, we can afterwards give $[\$$ to the automaton, and he has to reproduce all data values of $w$ (the only suffix after which $\Aa$ should accept contains all the data values of $w$).

Let $s$ be the stack after reading the part $[^N$.
Now assume that a dollar appear after some number $k$ of closing brackets (i.e.~we have $[^N]^k\$$ instead of $[^N]^N$).
Just after the dollar, the stack of $\Aa$ still begins with $s$.
As the data values from the $[^N$ part appear only in $s$, we have to reach a configuration $(q_k,s)$ for some state $q_k$.
As we have not enough states, for two numbers of closing brackets $k, k'$, the state will be the same, i.e.~$q_k=q_{k'}$.
Afterwards, the unique accepting run from $(q_k,s)$ and from $(q_{k'},s)$ is the same, in particular the same subset of data values appearing in $s$ is read.
This is not what is expected by the language $U$, as in one case the first $N-k$ data values of $w$ should be read, and in the other case the first $N-k'$ of them.
We have a contradiction.

For higher levels the overall idea will be the same.
In the next section we will define upper runs and returns.
In our level-$1$ example, the part of the run which does not modify $s$ will be an upper run, and the part which analyzes $s$ will correspond to a return.
Next we will say that a stack can be accessed only in a constant number of ways (like here we could only visit it for each of the states).
These ways will be described by ``run descriptors'' in Section \ref{sec:types};
for each of these ways we can recover only one subset of the data values stored in the stack.
Finally, in Section \ref{sec:final}, we conclude the proof.

\section{Upper runs and returns}

In this section we define two important classes of runs, called upper runs and returns,
and we show how they decompose.

A run $R$ is called \emph{$k$-upper}, where $0\leq k\leq n$, if
the topmost $k$-stack of $R(|R|)$ is a copy of the topmost $k$-stack of $R(0)$, but possibly some changes were made to it.
This has to be properly understood for $k=0$: we assume that a $\push 0$ operation makes a copy of the topmost $0$-stack, and then modifies its contents; 
in particular a run which just performs $\push 0$ is $0$-upper.
Notice that every run is $n$-upper, and that a $k$-upper run is also $l$-upper for $l\geq k$.

A run $R$ is called a \emph{$k$-return}, where $1\leq k\leq n$, if 
\begin{itemize}
\item the topmost $(k-1)$-stack of $R(|R|)$ is obtained as a copy of the second topmost $(k-1)$-stack of $R(0)$ (in particular we require that the topmost $k$-stack has size at least $2$), and
\item while tracing this copy of the second topmost $(k-1)$-stack of $R(0)$ which finally becomes the topmost $(k-1)$-stack of $R(|R|)$, it is never the topmost $(k-1)$-stack of $R(i)$ for all $i<|R|$.
\end{itemize}
Notice that a run $R$ is a $k$-return exactly when the topmost $k$-stack of $R(|R|)$ is obtained from the topmost $k$-stack of $R(0)$ by removing its topmost $(k-1)$-stack,
and this removing was done in the last step of $R$.
In particular every $k$-return is $k$-upper.

\begin{table*}
\caption{Stack contents of the example run, and subruns being $k$-upper runs and $k$-returns}
\label{tab:example}
$$\begin{array}{c|l|l|l|l|l}
j&\mbox{stack of }R(j)&\{i:\subrun{R}{i}{j}\ 0\mbox{-upper}\}&\{i:\subrun{R}{i}{j}\ 1\mbox{-upper}\}&\{i:\subrun{R}{i}{j}\ 1\mbox{-return}\}&\{i:\subrun{R}{i}{j}\ 2\mbox{-return}\}\\
\hline
0&[ab][cd]&\{0\}&\{0\}&\emptyset&\emptyset\\
1&[ab][cd][ce]&\{0,1\}&\{0,1\}&\emptyset&\emptyset\\
2&[ab][cd][c]&\{2\}&\{0,1,2\}&\{0,1\}&\emptyset\\
3&[ab][cd]&\{0,3\}&\{0,3\}&\emptyset&\{1,2\}\\
4&[ab][c]&\{4\}&\{0,3,4\}&\{0,3\}&\emptyset\\
5&[ab][cd]&\{4,5\}&\{0,3,4,5\}&\emptyset&\emptyset\\
6&[ab][c]&\{4,6\}&\{0,3,4,5,6\}&\{5\}&\emptyset
\end{array}$$
\end{table*}

\begin{example}
Consider a HOPAD of level 2.
In this example we omit data values; brackets are used to group symbols in one $1$-stack.
Consider a run $R$ of length $6$ which begins with a stack $[ab][cd]$, and the operations between consecutive configurations are:
$$\mathsf{push}^2(e),\ \mathsf{pop}^1,\ \mathsf{pop}^2,\ \mathsf{pop}^1,\ \mathsf{push}^1(d),\ \mathsf{pop}^1.$$
Recall that our definition is that a $\mathsf{push}$ of any level can change the topmost stack symbol.
The contents of the stacks of the configurations in the run, and subruns being $k$-upper runs and $k$-returns are presented in Table \ref{tab:example}.
Notice that $R$ is not a $1$-return.
\end{example}

It is important that returns and upper runs can be always divided into smaller fragments which are again returns and upper runs.
More precisely, from \cite{parys-panic-new} we have the following characterizations of $k$-returns and $k$-upper runs (these decompositions can be obtained quite easily by analyzing all possible behaviors of an automaton).

\begin{prop}\lab{prop:return}
	A run $R$ is an $r$-return (where $1\leq r\leq n$) if and only if
	\begin{enumerate}
	\item	$|R|=1$, and the operation performed by $R$ is $\mathsf{pop}^r$, or
	\item	the first operation performed by $R$ is $\mathsf{pop}^k$ for $k<r$, or $\mathsf{push}^k$ for $k\neq r$, and $\subrun{R}{1}{|R|}$ is an $r$-return, or
	\item	the first operation performed by $R$ is $\mathsf{push}^k$ for $k\geq r$, and $\subrun{R}{1}{|R|}$ is a composition of a $k$-return and an $r$-return.
	\end{enumerate}
\end{prop}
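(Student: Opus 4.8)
The plan is to prove Proposition~\ref{prop:return} by giving a careful case analysis on the first operation performed by $R$, showing both directions of the equivalence. Since an $r$-return is characterized (by the remark preceding the proposition) as a run whose net effect is to remove the topmost $(r-1)$-stack of $R(0)$, with that removal happening in the very last step, the whole argument amounts to tracking how this to-be-removed $(r-1)$-stack is affected by the first operation, and then inductively classifying the remainder $\subrun{R}{1}{|R|}$.

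First I would dispose of the base case: if the first (and only) operation is $\pop r$, then $R$ directly removes the topmost $(r-1)$-stack in a single step, which is exactly item~1; conversely a length-one $r$-return must perform $\pop r$. For the inductive step I split on what the first operation does to the topmost $k$-stacks. If the first operation is $\pop k$ with $k < r$ or $\push k$ with $k \neq r$, then this operation acts strictly \emph{inside} the topmost $(r-1)$-stack (for $k<r$) or at a level that merely shuffles whole $(r-1)$-stacks without copying or deleting the particular one we are tracing (for $k>r$, a $\push k$ duplicates a $k$-stack but leaves the identity of the topmost $(r-1)$-stack being traced unchanged, since the copy that we continue to follow still sits above the second-topmost $(r-1)$-stack of $R(0)$). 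In all these subcases the second-topmost $(r-1)$-stack of $R(0)$ is still present as the second-topmost one after the first step, and the eventual return is carried out entirely by the tail, so $R$ is an $r$-return iff $\subrun{R}{1}{|R|}$ is; this is item~2. The remaining possibility is $\push k$ with $k \geq r$ but not excluded above, i.e.\ $k \geq r$ combined with the $k=r$ case pulled out of item~2: here the first operation copies the topmost $k$-stack (which contains the topmost $(r-1)$-stack), so after the push there are now \emph{two} copies of the relevant $(r-1)$-stack sitting above the original second-topmost one. To expose the original second-topmost $(r-1)$-stack as the topmost $(r-1)$-stack, the tail must first remove the freshly created copy — a $k$-return — and then perform an $r$-return on what is left; this yields the composition in item~3.

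The key technical device I would use throughout is the characterization in the remark: rather than reasoning directly about the ``tracing'' condition, I would translate everything into the net-effect statement (remove the topmost $(r-1)$-stack, the removal occurring in the final step) and verify that each decomposition preserves it. For the $\Leftarrow$ directions this is routine: composing a $k$-return with an $r$-return, or prepending an innocuous first operation to an $r$-return, clearly produces a run with the correct net effect and the correct last step. For the $\Rightarrow$ directions I would argue that the first operation of any $r$-return must be of one of the listed forms — in particular it cannot be a $\pop k$ with $k > r$ (that would remove too much, destroying the $(r-1)$-stack we need to preserve) nor a $\pop k$ with $k=r$ of length greater than one (since an $r$-return of length $>1$ cannot begin by already removing the target) — and that once the form is fixed the tail inherits exactly the claimed return structure.

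The main obstacle I expect is the $k=r$ case in item~3 and its clean separation from item~2. When $k=r$ the first operation is $\push r$, which both copies the topmost $(r-1)$-stack and pushes it, so one must argue carefully that the \emph{additional} copy has to be popped off by a genuine $r$-return (of the current top $r$-stack) before the original second-topmost $(r-1)$-stack can surface, and that this split of the tail into ``$r$-return then $r$-return'' is unique. Making the tracing/identity argument rigorous — i.e.\ pinning down which physical $(r-1)$-stack is being followed across the copies introduced by $\push k$ operations with $k \geq r$, and ensuring the first $k$-return in the composition removes precisely the spurious copy while never touching the target — is the delicate bookkeeping step; the rest is a mechanical check against the net-effect characterization.
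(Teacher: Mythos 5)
Your treatment of a first operation $\push{k}$ with $k>r$ contains a genuine error, and it is exactly the error that the statement of item~3 (which allows all $k\geq r$, not only $k=r$) is designed to avoid. You claim that for $\push{k}$ with $k>r$ the run $R$ is an $r$-return \emph{if and only if} $\subrun{R}{1}{|R|}$ is an $r$-return, on the grounds that such a push ``leaves the identity of the topmost $(r-1)$-stack being traced unchanged.'' This is false: a $\push{k}$ with $k>r$ copies the whole topmost $(k-1)$-stack, and since the second-topmost $(r-1)$-stack of $R(0)$ (the stack being traced) lies inside that $(k-1)$-stack, it is copied as well. The trace of the final topmost $(r-1)$-stack may therefore follow either the fresh copy (then the tail is an $r$-return, your item~2) or the buried original, which can resurface only after the freshly pushed $(k-1)$-stack is removed again, i.e., after a $k$-return followed by an $r$-return (item~3 with $k>r$). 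Your biconditional denies this second possibility, so your proof of the ``only if'' direction is incomplete; read literally, it proves something false, and it also leads you to misread item~3 as needed only for the ``$k=r$ case pulled out of item~2.''

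A concrete counterexample sits in the paper's own Table~\ref{tab:example} (level $n=2$, $r=1$): the run $\subrun{R}{0}{4}$, performing $\push{2}(e),\pop{1},\pop{2},\pop{1}$ from stack $[ab][cd]$, is a $1$-return --- it ends in $[ab][c]$, exposing the $c$ of the original $[cd]$, with the removal of $d$ in the last step. It begins with $\push{2}$, i.e., $k=2>r=1$, yet its tail $\subrun{R}{1}{4}$ is \emph{not} a $1$-return (the table shows $1\notin\{0,3\}$): the final topmost $0$-stack traces back to the fourth-topmost, not the second-topmost, $0$-stack of $R(1)$, because the second-topmost one (the $c$ inside the pushed copy $[ce]$) is destroyed by the $\pop{2}$. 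Instead, $\subrun{R}{1}{4}$ is the composition of the $2$-return $\subrun{R}{1}{3}$ and the $1$-return $\subrun{R}{3}{4}$, exactly as item~3 prescribes. To repair the proof, for a first operation $\push{k}$ with $k\geq r$ you must split according to which copy the trace follows: if it follows the fresh copy (possible only when $k>r$, since for $k=r$ the fresh copy sits strictly above the traced stack), the tail is an $r$-return; if it follows the buried original, the portion of the tail up to the step that re-exposes the original $(k-1)$-stack is a $k$-return and the remainder is an $r$-return. The rest of your argument (the base case, $\pop{k}$ and $\push{k}$ with $k<r$, the exclusion of $\pop{k}$ with $k\geq r$ as a non-final first move) is sound.
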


\begin{prop}\lab{prop:return-bis}
	A run $R$ is $k$-upper (where $0\leq k\leq n$) if and only if
	\begin{enumerate}
	\item	$R$ performs only operations of level at most $k$, or
	\item	$|R|=1$, and the operation performed by $R$ is $\mathsf{push}^r$ for any $r\geq k+1$, or
	\item	the first operation performed by $R$ is $\mathsf{push}^r$ for $r\geq k+1$, and $\subrun{R}{1}{|R|}$ is an $r$-return, or
	\item	$R$ is a composition of two nonempty $k$-upper runs.
	\end{enumerate}
\end{prop}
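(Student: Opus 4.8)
The plan is to prove Proposition~\ref{prop:return-bis} by establishing both directions, relying on a careful case analysis of the first operation of the run $R$, exactly as the characterization of $k$-returns in Proposition~\ref{prop:return} suggests. The key observation to keep in mind throughout is the definition: $R$ is $k$-upper precisely when the topmost $k$-stack of $R(|R|)$ is a (possibly modified) copy of the topmost $k$-stack of $R(0)$, meaning that this topmost $k$-stack is never entirely removed during the run. Equivalently, at no intermediate step does a $\pop{l}$ or $\collapse{l}$ with $l > k$ destroy the $k$-stack we are tracing; all operations of level exceeding $k$ that touch it must be $\push{r}$ operations with $r \geq k+1$ whose created copy is eventually popped off again.

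\emph{The ``if'' direction} is the routine one: I would check that each of the four listed forms produces a $k$-upper run. Case~(1) is immediate, since operations of level at most $k$ never remove the topmost $k$-stack. For case~(2), a single $\push{r}$ with $r \geq k+1$ duplicates the topmost $r$-stack, and since $r-1 \geq k$ the original topmost $k$-stack survives unchanged inside the new copy, so $R$ is $k$-upper. Case~(3) combines the initial $\push{r}$ with the fact that $\subrun{R}{1}{|R|}$ is an $r$-return, which by the remark after the definition means it ends by popping exactly the $(r-1)$-stack that the $\push{r}$ had created; after this cancellation the net effect on the topmost $k$-stack is that of a $k$-upper run. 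Case~(4) follows because composing two $k$-upper runs preserves the property of the topmost $k$-stack never being destroyed.

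\emph{The ``only if'' direction} is where the work lies, and I would argue by induction on $|R|$, branching on the first operation. If the first operation has level at most $k$, I would try to absorb it and appeal to either case~(1) directly or to case~(4) by splitting off this first step. If the first operation is $\push{r}$ or $\pop{r}$ with $r \leq k$, the situation stays within level $k$ and again feeds into cases~(1) or~(4). The genuinely interesting situation is a first operation $\push{r}$ with $r \geq k+1$: this creates a fresh copy of the topmost $(r-1)$-stack. Since the final topmost $k$-stack of $R$ is a copy of the initial one (lying \emph{below} the freshly pushed material), that pushed copy must be completely removed at some point; I would identify the earliest moment at which the run returns below the pushed $(r-1)$-stack. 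The prefix up to that moment is then an $r$-return by Proposition~\ref{prop:return}, and what remains is a shorter $k$-upper run, yielding either case~(3) (if nothing remains) or case~(4) after recombining.

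\textbf{The main obstacle} I anticipate is the bookkeeping in the $\push{r}$ case: I must argue rigorously that the pushed copy is eventually entirely removed and pinpoint the first step achieving this, so that the prefix matches the $r$-return definition (the traced $(r-1)$-stack never being topmost until the final step) and the suffix remains $k$-upper on the correctly-identified underlying $k$-stack. This requires tracking which copy of which stack is being followed across the duplication, precisely the delicate ``tracing'' condition in the definition of a $k$-return. Since Proposition~\ref{prop:return} is already available, I would lean on it to handle the prefix, reducing my task to verifying that the decomposition point exists and that the suffix inherits the $k$-upper property by the induction hypothesis.
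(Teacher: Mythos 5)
The paper itself never proves Proposition \ref{prop:return-bis}: it cites \cite{parys-panic-new} and remarks only that the decomposition is obtained by analyzing all possible behaviors of the automaton. Your plan (case distinction on the first operation) is exactly that kind of analysis, and your ``if'' direction is essentially correct. However, your ``only if'' direction has a genuine gap in its central case. When the first operation is $\push{r}$ with $r\geq k+1$, you assert that the final topmost $k$-stack of $R$, being a copy of the initial one, lies \emph{below} the freshly pushed material, and you conclude that the pushed copy ``must be completely removed at some point'', so that a prefix of $\subrun{R}{1}{|R|}$ is an $r$-return. That premise is false: the $\push{r}$ itself creates a copy of the topmost $k$-stack (this is precisely why your case (2) run is $k$-upper), so a $k$-upper run may simply continue inside the fresh copy and never remove it. Concretely, for $n=2$, $k=1$ and initial stack $[ab][cd]$, the two-step run performing $\push2(e)$ and then $\push1(f)$ is $1$-upper: its final topmost $1$-stack $[cef]$ is the pushed, then modified, copy of $[cd]$. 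Here $\subrun{R}{1}{2}$ consists of a single $\push1$ and has no prefix that is a $2$-return, so the decomposition you aim for does not exist; the run is of form (4), obtained by splitting off just the first step ($\push2(e)$ is $1$-upper by (2), and $\push1(f)$ is $1$-upper by (1)). A correct argument must therefore distinguish, after the initial $\push{r}$, whether the final topmost $k$-stack is traced through the fresh copy (then $R=\subrun{R}{0}{1}\circ\subrun{R}{1}{|R|}$ with both factors $k$-upper, giving (2) or (4)) or through the original $k$-stack left underneath (only then does your $r$-return extraction apply, giving (3) or (4)). Your sketch covers only the second alternative.

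Two smaller points. A complete case analysis must also dispose of a first operation $\pop{r}$ with $r\geq k+1$: no such run is $k$-upper, because the topmost $k$-stack of $R(0)$ is destroyed in the first step and no stack of $R(1)$ is a copy of it; you never mention this case. Also, in your sub-case the prefix is certified to be an $r$-return by verifying the \emph{definition} of an $r$-return (taking the first moment at which the traced copy of the second topmost $(r-1)$-stack of $R(1)$ becomes topmost --- your ``earliest moment'' idea, which is sound but needs care, since pushes of level above $r$ may duplicate that buried stack and the trace may pass to a copy), not ``by Proposition \ref{prop:return}'': that proposition characterizes returns by their first operation and plays no role in certifying the prefix. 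Finally, $\collapse{l}$ should not appear in the argument at all, since the automata considered in this part of the paper have no collapse.
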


\section{Types of Stacks}\lab{sec:types}

In this section we assign to each $k$-stack a type from a finite set which, in some sense, describes possible returns and upper runs from a configuration having this $k$-stack as the topmost $k$-stack.
Additionally, we also define a set of data values which can be extracted from a $k$-stack; the idea is that we can extract either none or all data values contained in such a set.

For this section we fix an $n$-HOPAD $\Aa$ with stack alphabet $\Gamma$, state set $Q$, and input alphabet $A$.
Moreover we fix a morphism $\varphi\colon A^*\to M$ into a finite monoid.
In our application the monoid will be testing whether the word is empty, or is a single closing bracket, or begins with a dollar, or is any other word; however the results of this section work for any monoid.
For a run $R$ we write $\varphi(R)$ to denote the value of $\varphi$ on the word read by $R$ 
(more precisely $\varphi$ is applied only to the first coordinates of the letters of this word).

\subsection{Returns}

We begin by defining types which describe returns.
In the next subsection we indirectly use the same types to describe upper runs;
this will be possible because upper runs consist of returns (as described by Proposition \ref{prop:return-bis}).
To every $k$-stack $s^k$ (where $0\leq k\leq n$) we assign a set $\type(s^k)\subseteq\Tt^k$; it contains some \emph{run descriptors}.
The sets $\Tt^k$ of all possible run descriptors are defined inductively as follows (where $\Pp(X)$ denotes the power set of $X$):
\begin{align*}
	\Tt^k &= \{\mathsf{ne}\}\cup\big(\Pp(\Tt^n)\times\Pp(\Tt^{n-1})\times\dots\times\Pp(\Tt^{k+1})\times Q\times \Dd^k\big),\quad\mbox{where}\\
	\Dd^k &= \bigcup_{r=k+1}^nM\times\{r\}\times\Pp(\Tt^n)\times\Pp(\Tt^{n-1})\times\dots\times\Pp(\Tt^{r+1})\times Q.
\end{align*}
Recall that $Q$ is the set of states, and $M$ is the finite monoid used to distinguish classes of input words.

Before we define types, let us describe their intended meaning.
We say that a run $R$ \emph{agrees} with $(m,r,\Sigma^n,\Sigma^{n-1},\dots,\Sigma^{r+1},q)\in\Dd^k$ if
\begin{itemize}
\item	$\varphi(R)=m$, and
\item	$R$ is an $r$-return, and
\item	$R$ ends in a stack $t^n:t^{n-1}:\dots:t^r$ such that $\Sigma^i\subseteq \type(t^i)$ for $r+1\leq i\leq n$, and
\item	the state of $R(|R|)$ is $q$.
\end{itemize}
The acronym $\mathsf{ne}$ stands for nonempty; we have $\mathsf{ne}\in \type(s^k)$ when $s^k$ is nonempty.
A typical run descriptor in $\Tt^k$ is of the form $\sigma=(\Psi^n,\Psi^{n-1},\dots,\Psi^{k+1},p,\widehat\sigma)$.
By adding $\sigma$ to the type of some $s^k$, we claim the following.
If for each $k+1\leq i\leq n$ we take an $i$-stack $t^i$ that satisfies the claims of $\Psi^i$,
then there is a run which starts in state $p$ and stack $t^n:t^{n-1}:\dots:t^{k+1}:s^k$, and agrees with $\widehat\sigma$.
In other words we have the following lemma.
\begin{lemma}\lab{lem:run2type}
	Let $0\leq k\leq n$, let $\widehat\sigma\in\Dd^k$, and let $c=(p,s^n:s^{n-1}:\dots:s^k)$ be a configuration.
	The following two conditions are equivalent:
	\begin{enumerate}
	\item	there exists a run from $c$ which agrees with $\widehat\sigma$, 
	\item	$\type(s^k)$ contains a run descriptor $(\Psi^n,\Psi^{n-1},\dots,\Psi^{k+1},p,\widehat\sigma)$
		such that $\Psi^i\subseteq \type(s^i)$ for $k+1\leq i\leq n$.
	\end{enumerate}
\end{lemma}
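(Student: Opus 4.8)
The plan is to prove the two implications together by induction, taking Proposition~\ref{prop:return} as the driving tool. The set $\type(s^k)$ is intended to be the least set closed under rules that mirror the three clauses of that proposition, so the equivalence should come down to matching, clause by clause, the recursive construction of $\type$ against the decomposition of an $r$-return. Throughout, I would write $\widehat\sigma=(m,r,\Sigma^n,\dots,\Sigma^{r+1},q)\in\Dd^k$ and use that a run agreeing with $\widehat\sigma$ is an $r$-return reading a word of $\varphi$-image $m$, ending in state $q$ and in a stack $t^n:\dots:t^r$ with $\Sigma^i\subseteq\type(t^i)$ for $i>r$.

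For the direction $1\Rightarrow2$ I would take a run $R$ from $c$ agreeing with $\widehat\sigma$ and induct on $|R|$, splitting according to Proposition~\ref{prop:return}. In clause~1 the run is a single $\pop{r}$; here the required descriptor lies in $\type(s^k)$ by the base case of the construction, and the containments $\Psi^i\subseteq\type(s^i)$ hold trivially since the higher stacks are untouched. In clause~2 the first operation is $\pop{k'}$ with $k'<r$ or $\push{k'}$ with $k'\neq r$, and $\subrun{R}{1}{|R|}$ is again an $r$-return; I apply the induction hypothesis to this shorter run on the stack produced by the first operation, and then observe that this operation changes only the topmost $k$-stack (and possibly one higher stack) in a way tracked by the recursion defining $\type$, so the descriptor obtained for the new stack is carried back to a descriptor of $s^k$. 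Clause~3, where the first operation is $\push{k'}$ with $k'\geq r$ and $\subrun{R}{1}{|R|}$ factors as a $k'$-return $R_1$ followed by an $r$-return $R_2$, is the substantive case: the $\push{k'}$ duplicates the topmost $(k'-1)$-stack, the induction hypothesis on $R_1$ yields a descriptor in the appropriate type, the induction hypothesis on $R_2$ yields another, and the defining rule for $\type$ in this situation combines the two; matching up the $\Sigma$- and $\Psi$-components at each level together with the monoid identity $\varphi(R_1)\varphi(R_2)=m$ produces the desired descriptor.

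The direction $2\Rightarrow1$ is the converse and proceeds by the same case analysis, now read off the recursive definition of $\type$: whichever clause placed the descriptor into $\type(s^k)$ tells us which first operation to perform and which smaller descriptors to realize, and the induction hypothesis turns those smaller descriptors into actual subruns that we prepend or compose. I expect the composition case (clause~3) to be the main obstacle in both directions. The difficulty is the bookkeeping of types across the duplicated $(k'-1)$-stack: one must check that the stack reached after the inner $k'$-return really carries a type containing the components demanded by the descriptor of $R_2$, and that the families $\Sigma^i$ and $\Psi^i$ at the various levels, together with the monoid product, thread correctly through the two concatenated runs so that the composite is exactly an $r$-return agreeing with $\widehat\sigma$. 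Once this threading is verified in clause~3, the remaining cases are routine consequences of the recursive construction of $\type$ and the well-formedness conditions on stacks.
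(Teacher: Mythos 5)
Your direction $1\Rightarrow 2$ is essentially the paper's argument: induction on $|R|$, case split by Proposition~\ref{prop:return}, with each clause feeding the matching rule of Definition~\ref{def:types} (the paper additionally reduces to $k=0$ first and recovers general $k$ by a small conversion lemma, but that is bookkeeping). The genuine gap is in $2\Rightarrow 1$, which you present as ``the same case analysis'' closed off by ``the induction hypothesis'' --- without ever saying what the induction is on. There is no run yet, so it cannot be run length; and the natural candidate, induction on the derivation that placed $\sigma=(\Psi^n,\dots,\Psi^{k+1},p,\widehat\sigma)$ into $\type(s^k)$, fails. Look at the pop rule (point~1b of Definition~\ref{def:types}): it inserts $\sigma$ into $\type(s^0)$ for an \emph{arbitrary} run descriptor $\tau\in\Tt^k$, which becomes the level-$k$ component $\{\tau\}$ of $\sigma$. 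To realize $\sigma$ you must perform $\pop k$ and then realize $\tau$ on the exposed stack $s^n:\dots:s^k$. But $\tau\in\type(s^k)$ is a \emph{hypothesis} of condition~2 ($\Psi^k\subseteq\type(s^k)$), not a sub-derivation of the derivation of $\sigma\in\type(s^0)$: via Definition~\ref{def:types-k} it unwinds to memberships in the types of the $0$-stacks composing $s^k$, whose fixpoint derivations can be arbitrarily long and arbitrarily late relative to $\sigma$'s. Induction on stack size does not rescue this either, since the push rules make the stack strictly larger. So the recursive call you wave at is not smaller under any measure your sketch sets up; and note that this failure occurs in the pop case, which you dismissed as routine, whereas the composition case you flagged as the main obstacle is comparatively benign (there the auxiliary descriptors $\tau$ and $\chi$ really are required by the rule to lie in types at earlier fixpoint iterations).

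The paper closes this hole with an extra layer absent from your plan: a realizability predicate (``having a $d$-witness''), defined by downward induction on the level, in which realizability of the higher-level hypothesis sets $\Psi^{k+1},\dots,\Psi^n$ is an \emph{assumption} of an implication rather than something to be proved, and realizability of the result sets $\Sigma^{r+1},\dots,\Sigma^n$ is threaded through as its conclusion. With that formulation the problematic pop case is discharged by the assumption instead of an induction hypothesis; the induction for $0$-stacks is on the iteration of the fixpoint construction of Definition~\ref{def:types}, which does cover the descriptors explicitly demanded by the push rules; and the statement is lifted to $k$-stacks by a separate induction on level and size through composers. Only after all pairs $(s^k,\sigma)$ are shown realizable in this stratified sense does Lemma~\ref{lem:run2type}, $2\Rightarrow 1$, follow by unfolding the predicate. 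Without some such stratification --- or another well-founded measure you would have to invent --- your proof cannot be completed as described.
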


Moreover, to every $k$-stack $s^k$ (where $0\leq k\leq n$), and every run descriptor $\sigma\in\type(s^k)$ 
we assign a set of \emph{important data values} $\idv(s^k,\sigma)\subseteq\DV$.
The idea is that although $s^k$ may contain many data values,
a run having $s^k$ in its first configuration and described by $\sigma$ can use only some subset of these data values; 
these will be the important data values of $s^k$ (from the point of view of $\sigma$).
There is a small technical difficulty that a run can read (by accident) a data value already present in $s^k$ while performing a $\push{}$ operation;
of course this is not what we want to count.
To deal with this, we distinguish some data value $0\in\DV$ and we define normalized runs:
a run $R$ is \emph{normalized} when every of its non-$\varepsilon$-transition performing a $\push{}$ reads data value $0$.
The sets $\idv$ are designed to satisfy the following lemma.

\begin{lemma}\lab{lem:idv}
	Let $0\leq k\leq n$, let $\widehat\sigma=(m,r,\Sigma^n,\Sigma^{n-1},\dots,\Sigma^{r+1},q)\in\Dd^k$, let $c=(p,s^n:s^{n-1}:\dots:s^k)$ be a configuration, 
	and let $d\in\DV\setminus\{0\}$.
	The following two conditions are equivalent: 
	\begin{enumerate}
	\item	there exists a normalized $r$-return from $c$ which agrees with $\widehat\sigma$, and either reads $d$ or ends in a stack $t^n:t^{n-1}:\dots:t^r$ such that 
		$d\in\idv(t^i,\tau)$ for some $r+1\leq i\leq n$ and some $\tau\in\Sigma^i$,
	\item	$\type(s^k)$ contains a run descriptor $\sigma=(\Psi^n,\Psi^{n-1},\dots,\Psi^{k+1},p,\widehat\sigma)$ such that $\Psi^i\subseteq \type(s^i)$ for $k+1\leq i\leq n$,
		and $d\in\idv(s^k,\sigma)$ or $d\in\idv(s^i,\tau)$ for some $k+1\leq i\leq n$ and some $\tau\in\Psi^i$.
	\end{enumerate}
\end{lemma}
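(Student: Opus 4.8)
The plan is to establish both implications by a single induction that follows exactly the case analysis driving the recursive definition of $\idv$, namely the decomposition of $r$-returns supplied by Proposition~\ref{prop:return}. The existence half of the correspondence is already available from Lemma~\ref{lem:run2type}: a run agreeing with $\widehat\sigma$ exists from $c$ iff $\type(s^k)$ carries the matching descriptor $\sigma=(\Psi^n,\dots,\Psi^{k+1},p,\widehat\sigma)$ with $\Psi^i\subseteq\type(s^i)$. Hence the only genuinely new work is to trace the \emph{single} data value $d\ne 0$ through the construction and to verify, clause by clause, that $d$ is recorded in $\idv(s^k,\sigma)$, or in some $\idv(s^i,\tau)$ with $\tau\in\Psi^i$, exactly when the corresponding normalized $r$-return reads $d$ or deposits $d$ into one of the surviving higher layers $t^i$ ($i\ge r+1$) in a position flagged by $\Sigma^i$. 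I would prove $1\Rightarrow 2$ by induction on $|R|$, and $2\Rightarrow 1$ by the symmetric induction on the depth to which the definition of $\idv$ is unfolded, reusing Lemma~\ref{lem:run2type} in each direction to supply or consume the underlying run.

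First I would dispatch the two routine cases of Proposition~\ref{prop:return}. In the base case, where $|R|=1$ and the operation is $\pop r$, the run reads $d$ iff its single transition is a non-$\varepsilon$ pop whose read value equals the data value sitting at the top of $s^k$; because $d\ne 0$ and $R$ is normalized this is the only possible reading, while the surviving layers satisfy $t^i=s^i$, so the defining clause of $\idv$ only has to insert that top value into $\idv(s^k,\sigma)$ and, for each $i\ge r+1$, inherit the values flagged in $s^i$ by the types in $\Sigma^i$ (which are among those demanded by $\Psi^i$). In the linear case (Proposition~\ref{prop:return}(2)), $R$ begins with $\pop{k'}$ ($k'<r$) or $\push{k'}$ ($k'\ne r$) and $\subrun{R}{1}{|R|}$ is again an $r$-return; a leading push contributes no reading of $d$ (normalization forces it to read $0\ne d$), whereas a leading $\pop{k'}$ may itself read the top value of $s^k$, which the clause records explicitly. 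Apart from this single recorded contribution, $d$ is read or carried by $R$ iff it is read or carried by $\subrun{R}{1}{|R|}$, and the claim then follows from the induction hypothesis together with the relation between the types of the stacks of $R(1)$ and of $R(0)$ already worked out for Lemma~\ref{lem:run2type}.

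The main obstacle is the composition case (Proposition~\ref{prop:return}(3)), where $R$ starts with $\push{k'}$ for some $k'\ge r$ and $\subrun{R}{1}{|R|}$ splits as $R_1\circ R_2$ with $R_1$ a $k'$-return and $R_2$ an $r$-return. The delicate feature is that $d$ need be neither read nor carried into the original higher layers, but may be \emph{stored} in the intermediate stack $u=R_1(|R_1|)=R_2(0)$ formed at the junction and only later extracted by $R_2$. To handle this I would invoke the induction hypothesis twice: applied to $R_2$ it rephrases ``$R_2$ reads or carries $d$'' as membership of $d$ in the $\idv$ sets of the layers of $u$ relative to the descriptor of $R_2$, and applied to $R_1$ (together with the opening push, which merely duplicates a $(k'-1)$-stack and, being normalized, reads no $d$) it rephrases those memberships of $u$ back in terms of the $\idv$ sets of $s^k$ and of the original layers $s^i$. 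Threading the two applications together is precisely what the composition clause of $\idv$ is built to express, and the crux is the bookkeeping that records \emph{which} layer --- $s^k$ through $\sigma$, or some $s^i$ through a $\tau\in\Psi^i$ --- is responsible for $d$, while matching it against the $\Sigma^i$ and $\Psi^i$ components threaded by the descriptors. Normalization is used throughout exactly to guarantee that no $\push{}$ ever spuriously reads $d$, so that every occurrence of $d\ne 0$ is either a genuine matching $\pop{}$ or a value faithfully copied along the stack; this is what makes the inductive accounting clean.
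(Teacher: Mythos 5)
Your $1\Rightarrow 2$ half is essentially the paper's own argument: induction on $|R|$ with the case analysis of Proposition~\ref{prop:return}, the single recorded reading for a leading $\pop{}$, normalization ruling out spurious readings at a $\push{}$, and a double application of the induction hypothesis in the composition case (first to the $r$-return, then to the $k'$-return), threaded through the composer clauses of Definitions~\ref{def:composer}--\ref{def:types-k}; the paper additionally reduces the general $k$ of the statement to $k=0$ by an auxiliary conversion lemma, but that step is routine.

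The genuine gap is in the $2\Rightarrow 1$ direction, which you dispatch in one sentence and which rests on a misleading premise. You write that the existence half ``is already available from Lemma~\ref{lem:run2type}'', so that the only new work is to check, clause by clause, that $d$ lies in $\idv$ \emph{exactly when the corresponding normalized $r$-return reads $d$ or deposits it}. But condition~1 of the lemma is existential, and the witnessing run genuinely depends on $d$ --- the paper stresses this immediately after the statement (``it can be different for every data value''). For a \emph{fixed} run $R$ agreeing with $\widehat\sigma$, in particular for the run handed to you by Lemma~\ref{lem:run2type}, the implication ``$d\in\idv\Rightarrow R$ reads or carries $d$'' is simply false; membership of $d$ only promises that \emph{some} run agreeing with $\widehat\sigma$ does so. Consequently Lemma~\ref{lem:run2type} cannot ``supply the underlying run'': the run must be \emph{constructed}, per data value, by recursion on the derivation of the $\idv$-membership in the fixpoint of Definition~\ref{def:types}, invoking the induction hypothesis (not Lemma~\ref{lem:run2type}) to build every sub-run through which $d$'s membership flows, and falling back on Lemma~\ref{lem:run2type} only for sub-runs in which $d$ plays no role. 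This is precisely the content of the paper's appendix proof: a notion of $d$-witness defined by downward induction on the stack level (so that the induction statement remains usable when inner descriptors are instantiated with modified surrounding stacks such as $s^k{:}s^{k-1}{:}\dots{:}s^0$), a lemma that composers preserve $d$-witnesses, a main induction on the iterations of the fixpoint construction, and a final induction on the level and size of $s^k$. Calling this ``the symmetric induction'' hides all of that: the two directions are structurally different (induction on the length of a given run versus construction of a new, $d$-dependent run by induction on a fixpoint derivation), and your proposal does not engage with the half of the lemma where the real work lies.
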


In the first condition we only say that there exists such run, moreover it can be different for every data value.
However we will use this lemma only in situations when this run is in some sense unique, so that this unique run will be described.
Observe also that a data value which was important in the first configuration of a run need not to be always read: it may also stay important in the last configuration of the run.
We can exclude this possibility by taking empty sets $\Sigma^i$ (in particular by taking $r=n$).

Now we come to the definition of types and important data values.
We first define a composer, which is then (in Definition \ref{def:types-k}) used to compose types of smaller stacks into types of greater stacks.

\begin{definition}\lab{def:composer}
	We define when $(\Phi^k,\Phi^{k-1}\dots,\Phi^l;\Psi^k)$ is a \emph{composer}, where $0\leq l<k\leq n$, 
	$\Phi^i\subseteq\Tt^i$ for $l\leq i\leq k$, and $\Psi^k\subseteq\Tt^k$.
	\begin{enumerate}
	\item\lab{pkt:composer1}
		Let $\sigma=(\Sigma^n,\Sigma^{n-1},\dots,\Sigma^{l+1},p,\widehat\sigma)\in\Tt^l$, and let
		$\tau=(\Sigma^n,\Sigma^{n-1},\dots,\Sigma^{k+1},p,\widehat\sigma)\in\Tt^k$ (in particular $\widehat\sigma\in\Dd^k$).
		Then we say that $(\Sigma^k,\Sigma^{k-1},\dots,\Sigma^{l+1},\{\sigma\};\{\tau\})$ is a composer.
	\item
		Tuple $(\emptyset,\emptyset,\dots,\emptyset;\{\mathsf{ne}\})$ is a composer.
	\item
		Let $\Psi^k\subseteq\Tt^k$, and for each $\tau\in\Psi^k$ let $(\Phi_\tau^k,\Phi_\tau^{k-1},\dots,\Phi_\tau^l;\{\tau\})$ be a composer.
		Then we say that $(\Phi^k,\Phi^{k-1},\dots,\Phi^l;\Psi^k)$ is a composer, where $\Phi^i=\bigcup_{\tau\in\Psi^k}\Phi_\tau^i$ for $l\leq i\leq k$.
	\end{enumerate}
\end{definition}

Now we will define types of $0$-stacks, and sets of important data values as a fixpoint.

\begin{definition}\lab{def:types}
	For each $0$-stack $s^0$, let $\type(s^0)$ be the smallest set satisfying the following conditions;
	additionally for $\sigma\in\type(s^0)$, let $\idv(s^0,\sigma)$ be the smallest set satisfying the following conditions.
	Let $s^0=(\alpha,d)$ be a $0$-stack, and let $p$ be a state.
	\begin{enumerate}
	\item
		Assume that $\Aa$ has a transition $(\alpha,p,a,q_1,\pop k)$ (where $1\leq k\leq n$ and $a\in A\cup\{\varepsilon\}$), and that either $a=\varepsilon$ or $d\not=\nodata$.
		We have two subcases:
		\begin{enumerate}
		\item\lab{pkt:types1a}
			Let 
			\begin{align*}
				\sigma=\left(\Psi^n,\Psi^{n-1},\dots,\Psi^{k+1},\{\mathsf{ne}\},\emptyset,\emptyset,\dots,\emptyset,p,(\varphi(a),k,
				\Psi^n,\Psi^{n-1},\dots,\Psi^{k+1},q_1)\right)\in\Tt^0.
			\end{align*}
			Then $\sigma\in\type(s^0)$. If $a\not=\varepsilon$, then $d\in\idv(s^0,\sigma)$.
		\item\lab{pkt:types1b}
			Let $\tau=(\Psi^n,\Psi^{n-1},\dots,\Psi^{k+1},q_1,(m,r,\Sigma^n,\Sigma^{n-1},\dots,\Sigma^{r+1},q))\in \Tt^k$, and
			$$\sigma=(\Psi^n,\Psi^{n-1},\dots,\Psi^{k+1},\{\tau\},\emptyset,\emptyset,\dots,\emptyset,p,(\varphi(a)m,r,\Sigma^n,\Sigma^{n-1},\dots,\Sigma^{r+1},q))\in \Tt^0.$$
			Then $\sigma\in\type(s^0)$. If $a\not=\varepsilon$, then $d\in\idv(s^0,\sigma)$.
		\end{enumerate}
	\item
		Assume that $\Aa$ has a transition $(\alpha,p,a,q_1,\push k(\beta))$ (where $1\leq k\leq n$).
		Let $t^0=(\beta,\nodata)$ if $a=\varepsilon$, or $t^0=(\beta,0)$ otherwise.
		Assume that $\type(t^0)$ contains a run descriptor
		$$\tau=(\Psi^n,\Psi^{n-1},\dots,\Psi^1,q_1,(m_1,r_1,\Omega^n,\Omega^{n-1},\dots,\Omega^{r_1+1},q_2)).$$
		Assume also that $(\Phi^k,\Phi^{k-1},\dots,\Phi^0;\Psi^k)$ is a composer such that $\Phi^0\subseteq \type(s^0)$.
		We have two subcases:
		\begin{enumerate}
		\item\lab{pkt:types2a}
			Assume that $r_1\neq k$.
			Let
			$$\sigma=(\Pi^n,\Pi^{n-1},\dots,\Pi^1,p,(\varphi(a)m_1,r_1,\Omega^n,\Omega^{n-1},\dots,\Omega^{r_1+1},q_2))\in \Tt^0,$$
			where
			$$\Pi^i=\left\{\begin{array}{ll}
				\Psi^i  & \mbox{for }k+1\leq i\leq n,\\
				\Phi^i  & \mbox{for }i=k,\\
				\Psi^i\cup\Phi^i  & \mbox{for }1\leq i\leq k-1.
			\end{array}\right.$$
			Then $\sigma\in\type(s^0)$, and $\bigcup_{\rho\in\Phi^0}\idv(s^0,\rho)\subseteq\idv(s^0,\sigma)$.
		\item\lab{pkt:types2b}
			Assume that $r_1=k$ and $\type(s^0)$ contains a run descriptor
			\begin{align*}\chi=\left(\Omega^n,\Omega^{n-1}\dots,\Omega^{k+1},\Upsilon^k,\Upsilon^{k-1},\dots,\Upsilon^1,q_2,(m_2,r_2,
			\Sigma^n,\Sigma^{n-1},\dots,\Sigma^{r_2+1},q_3)\right),\end{align*}
			where $r_2\leq k$. 
			Let
			$$\sigma=(\Pi^n,\Pi^{n-1},\dots,\Pi^1,p,(\varphi(a)m_1m_2,r_2,\Sigma^n,\Sigma^{n-1},\dots,\Sigma^{r_2+1},q_3))\in \Tt^0,$$
			where
			$$\Pi^i=\left\{\begin{array}{ll}
				\Psi^i  & \mbox{for }k+1\leq i\leq n,\\
				\Phi^i\cup\Upsilon^i  & \mbox{for }i=k,\\
				\Psi^i\cup\Phi^i\cup\Upsilon^i  & \mbox{for }1\leq i\leq k-1.
			\end{array}\right.$$
			Then $\sigma\in\type(s^0)$, and $\idv(s^0,\chi)\cup\bigcup_{\rho\in\Phi^0}\idv(s^0,\rho)\subseteq\idv(s^0,\sigma)$.
		\end{enumerate}
	\end{enumerate}
\end{definition}

Notice that the above operations are monotone (more run descriptors in types can only cause that more run descriptors are added to other types),
thus the unique smallest fixpoint exists.
Finally, we define types of greater stacks.

\begin{definition}\lab{def:types-k}
	For each $k$-stack $s^k$, where $1\leq k\leq n$, let $\type(s^k)$ be the smallest set satisfying the following conditions;
	additionally for $\sigma\in\type(s^k)$, let $\idv(s^k,\sigma)$ be the smallest set satisfying the following conditions.
	\begin{quote}
		Assume that $s^k=t^k:t^{k-1}$, and there exists a composer $(\Phi^k,\Phi^{k-1};\{\sigma\})$ for which $\Phi^k\subseteq\type(t^k)$ and $\Phi^{k-1}\subseteq\type(t^{k-1})$.
		Then $\sigma\in\type(s^k)$, and $\bigcup_{i=k-1}^k\bigcup_{\rho\in\Phi^i}\idv(t^i,\rho)\subseteq\idv(s^k,\sigma)$.
	\end{quote}	
\end{definition}

Let us now observe immediate consequences of Definitions \ref{def:composer} and \ref{def:types-k}.
We see that $\mathsf{ne}$ is in the type of a $k$-stack $s^k$ if and only if $k\geq 1$ and $s^k$ is nonempty.
We also see that a run descriptor $\sigma$ is in the type of a $k$-stack $s^k:s^{k-1}:\dots:s^l$ (where $l<k$) 
if and only if there exists a composer $(\Phi^k,\Phi^{k-1},\dots,\Phi^l;\{\sigma\})$
such that $\Phi^i\subseteq\type(s^i)$ for $l\leq i\leq k$ (moreover $\Phi^l$ has exactly one element).
Similarly, $d\in\idv(s^k:s^{k-1}:\dots:s^l, \sigma)$ if and only if there exists a composer $(\Phi^k,\Phi^{k-1},\dots,\Phi^l;\{\sigma\})$
such that $\Phi^i\subseteq\type(s^i)$ for $l\leq i\leq k$ and that $d\in\idv(s^j,\tau)$ for some $l\leq j\leq k$ and some $\tau\in\Phi^j$.

It is tedious but straightforward to prove Lemmas \ref{lem:run2type} and \ref{lem:idv} directly from the definitions of $\type$ and $\idv$.
For the 1$\Rightarrow$2 implications, we decompose the run according to Proposition \ref{prop:return}; 
this tells us which rules of Definition \ref{def:types} imply that an appropriate run descriptor is in the type, and that the considered data value in in $\idv$.
For the opposite implication, a run descriptor is in the type (or a data value is in $\idv$), because there is some derivation using the rules from Definition \ref{def:types}; 
by composing all these rules we obtain an appropriate run.

Let us mention that a similar notion of types was also present in \cite{parys-pumping}.
Those types were defined in a different, semantical way.
Namely, our Lemma \ref{lem:run2type} is used as a definition;
then it is necessary to prove that the type of $s^k$ does not depend on the choice of $s^n, s^{n-1},\dots,s^{k+1}$ present in the assumptions of the lemma.
This semantical definition is unsuitable for tracing important data values.
Our approach is better for the following reason:
when we add some run descriptor $(\Psi^n,\Psi^{n-1},\dots,\Psi^{k+1},p,\widehat\sigma)$ to the type of some $k$-stack, 
then in the sets $\Psi^i$ we only have the assumptions which are really useful (we never put there redundant run descriptors).

\subsection{Upper runs}

For upper runs we do not define separate types; we will be using here the types for returns.
Our goal is to obtain the following transfer property.

\begin{lemma}\lab{lem:idv-upper}
	Let $0\leq k\leq n$, and let $R$ be a normalized $k$-upper run such that no other normalized run $S$ from $R(0)$ ends in the same state as $R$ and satisfies $\varphi(S)=\varphi(R)$.
	Let $s^n:s^{n-1}:\dots:s^k$ be the stack or $R(0)$, and $t^n:t^{n-1}:\dots:t^k$ the stack of $R(|R|)$.
	Let $d,d'\in\DV\setminus\{0\}$ be data values which are not read by $R$ and which do not appear in $s^k$, and such that for each $k+1\leq i\leq n$ and each $\sigma\in\type(s^i)$
	it holds $d\in\idv(s^i,\sigma)\Iff d'\in\idv(s^i,\sigma)$.
	Then $d$ and $d'$ do not appear in $t^k$, and for each $k+1\leq i\leq n$ and each $\sigma\in\type(t^i)$ it holds $d\in\idv(t^i,\sigma)\Iff d'\in\idv(t^i,\sigma)$.
\end{lemma}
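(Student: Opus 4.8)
The plan is to proceed by induction on the structure of the $k$-upper run $R$, using the decomposition provided by Proposition~\ref{prop:return-bis}. The key conceptual point is that the hypotheses we must preserve can be summarized by a single predicate: say that a data value $e\in\DV\setminus\{0\}$ is \emph{visible} in a stack $u^n:u^{n-1}:\dots:u^k$ (relative to level $k$) if $e$ appears in $u^k$, or $e\in\idv(u^i,\sigma)$ for some $k+1\leq i\leq n$ and some $\sigma\in\type(u^i)$. The lemma then asserts that if $R$ does not read $d$ or $d'$, and $d,d'$ are indistinguishable by visibility at level $k$ in $R(0)$, then they remain indistinguishable at level $k$ in $R(|R|)$. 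The uniqueness hypothesis on $R$ (no other normalized $S$ from $R(0)$ ends in the same state with $\varphi(S)=\varphi(R)$) is what lets us conclude that the run descriptor witnessing visibility of $d$ in the target stack can only arise by transporting along $R$ the descriptor that witnessed visibility of $d$ in the source; this is precisely the determinism needed to make $d$ and $d'$ travel along identical routes.

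First I would dispatch the base cases of Proposition~\ref{prop:return-bis}: if $R$ performs only operations of level at most $k$ (case~1), then the topmost $k$-stacks of $R(0)$ and $R(|R|)$ differ only in ways confined to level $\leq k$, and the lower stacks $s^i=t^i$ for $i>k$ are untouched, so the conclusion is immediate. The single-step $\push^r$ case ($r\geq k+1$, case~2) merely pushes a fresh $(r-1)$-stack; since $d,d'$ do not appear in $s^k$ and a normalized push reads $0$, no new visibility distinction between $d$ and $d'$ is created. The crux is the inductive cases~3 and~4. In case~4, $R=R_1\circ R_2$ is a composition of two nonempty $k$-upper runs; here I would apply the induction hypothesis to $R_1$ to obtain that $d,d'$ are still indistinguishable at level $k$ in $R_1(|R_1|)=R_2(0)$, then apply it to $R_2$. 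One must check that the uniqueness hypothesis descends to $R_1$ and $R_2$: this follows because any alternative run contradicting uniqueness for a factor could be composed (using the prefix/suffix of $R$) into an alternative run for $R$ itself, contradicting the hypothesis on $R$.

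The main obstacle will be case~3: the first operation is $\push^r$ for $r\geq k+1$, and $\subrun{R}{1}{|R|}$ is an $r$-return. Here the run pushes a copy of the topmost $(r-1)$-stack, descends into it, and eventually pops back out, so that the topmost $k$-stack of $R(|R|)$ is reconstructed from material that was below level $k$ in $R(0)$. This is exactly the situation that Lemma~\ref{lem:idv} was built to control. The plan is to use Lemma~\ref{lem:idv} applied to the $r$-return to characterize, for each candidate descriptor $\sigma\in\type(t^i)$, precisely which data values lie in $\idv(t^i,\sigma)$: such a value is either read by the return, or was already visible (in the $\idv$ sense) in the stack of $R(1)$ at the appropriate level. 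Since $d,d'$ are not read by $R$, the ``read'' alternative is vacuous for them, and visibility of $d$ in $t^i$ reduces to visibility of $d$ in the stack of $R(1)$, which in turn reduces (as $R$'s first step is a normalized push reading $0$) to visibility in $R(0)$ — where $d$ and $d'$ agree by hypothesis. The delicate part is invoking the uniqueness of $R$ to ensure that the \emph{same} run descriptor $\sigma$ and the \emph{same} witnessing $\tau\in\Sigma^i$ in Lemma~\ref{lem:idv} serve for both $d$ and $d'$; without uniqueness, $d$ might become visible via one descriptor and $d'$ via a genuinely different one, breaking the biconditional. I expect the bulk of the work to lie in bookkeeping the indices $i$ across the composer structure of Definition~\ref{def:types-k}, and in verifying that the uniqueness hypothesis transfers to the sub-return invoked inside Lemma~\ref{lem:idv}.
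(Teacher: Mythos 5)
Your overall direction (induct along Proposition \ref{prop:return-bis}, use Lemma \ref{lem:idv} in the return case, use the uniqueness hypothesis to pin the witnessing run down to $R$) is in the spirit of the paper, but there is a genuine gap in the inductive structure: \emph{the uniqueness hypothesis does not descend to the factors of a composition}. In your case~4 you claim that an alternative run contradicting uniqueness for a factor ``could be composed (using the prefix/suffix of $R$) into an alternative run for $R$ itself''. This works for the suffix factor $R_2$ (an alternative $S_2$ from $R_2(0)$ yields $R_1\circ S_2$, contradicting uniqueness of $R$), but it fails for the prefix factor $R_1$: an alternative normalized run $S_1$ from $R(0)$ that ends in the same state as $R_1$ with $\varphi(S_1)=\varphi(R_1)$ will in general end in a \emph{different stack}, and the transitions of $R_2$ need not be executable from $S_1(|S_1|)$ (the topmost symbol may differ, a required data value may be absent, a pop may be blocked). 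So $S_1$ cannot be extended ``by the suffix of $R$'', and uniqueness of $R$ yields no contradiction; your induction hypothesis, which is a biconditional about the run $R$ itself at both endpoints, therefore cannot be applied to $R_1$. A related problem occurs in your case~3: implication $2\Rightarrow 1$ of Lemma \ref{lem:idv} only produces \emph{some} $r$-return from $R(1)$ carrying $d'$. There the situation is reparable, because that return starts at exactly the configuration $R(1)$ and hence composes with the first step of $R$ into a full run from $R(0)$, to which uniqueness applies; but no such repair exists for the prefix factor in the composition case.

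The paper resolves precisely this difficulty by \emph{not} inducting on the statement of the lemma. It defines, for each normalized $k$-upper run, sets $\origin^j$ that transport run descriptors backwards along the run, and proves Lemma \ref{lem:origin}, whose second point is existential and is stated for an \emph{arbitrary} starting configuration $c$ that shares the topmost $k$-stack with $R(0)$ and whose lower stacks contain the $\origin$ descriptors: if $d$ is important in $c$ via an $\origin$ descriptor, then there \emph{exists} a normalized $k$-upper run $S$ from $c$, with $\varphi(S)=\varphi(R)$ and the same final state, which either reads $d$ or keeps it important in the prescribed sets $\Sigma^i$. Because the conclusion is existential and the starting configuration is universally quantified, this statement is closed under composition and under the return case, so the induction goes through with no uniqueness assumption at all. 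Uniqueness of $R$ is then invoked exactly once, at the outermost level: taking $c=R(0)$ and data value $d'$, the run $S$ produced by Lemma \ref{lem:origin} ends in the same state as $R$ with the same $\varphi$-image, hence $S=R$, and the biconditional follows. To make your proof work you would have to reformulate your induction hypothesis in this existential, free-starting-configuration form; as written, the induction does not close.
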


Intuitively, the lemma says that if two data values cannot be distinguished by the $\idv$ sets in the initial configuration of a $k$-upper run, they also cannot be distinguished in the final configuration.
In order to obtain this lemma we define $\origin$ sets as follows (``$\origin$'' stands for ``source'').
Let $0\leq k\leq n$, let $R$ be a normalized $k$-upper run from stack $s^n:s^{n-1}:\dots:s^k$ to stack $t^n:t^{n-1}:\dots:t^k$,
and let $\Sigma^i\subseteq\type(t^i)$ for $k+1\leq i\leq n$.
For $k+1\leq j\leq n$ we define a set $\origin^j(R,\Sigma^n,\Sigma^{n-1},\dots,\Sigma^{k+1})\subseteq\type(s^j)$.
To shorten the notation, we write $\origin^j$ for $\origin^j(R,\Sigma^n,\Sigma^{n-1},\dots,\Sigma^{k+1})$ (we should keep in mind that $\origin^j$ depends on $R$ and on $\Sigma^n,\Sigma^{n-1},\dots,\Sigma^{k+1}$).
The intuition is that $d\in\idv(t^i,\sigma)$ for some $k+1\leq i\leq n$ and some $\sigma\in\Sigma^i$ if and only if $d\in\idv(s^j,\tau)$ for some $k+1\leq j\leq n$ and some $\tau\in\origin^j$.
The actual statement is more complicated, and is given by the following lemma.

\begin{lemma}\lab{lem:origin}
	Let $0\leq k\leq n$, let $R$ be a normalized $k$-upper run from $(p,s^n:s^{n-1}:\dots:s^k)$ to $(q,t^n:t^{n-1}:\dots:t^k)$,
	and let $\Sigma^i\subseteq\type(t^i)$ for $k+1\leq i\leq n$.
	Let also $d\in\DV\setminus\{0\}$ be a data value which does not appear in $s^k$.
	\begin{enumerate}
	\item\lab{pkt:origin1}
		Assume that $d\in\idv(t^i,\sigma)$ for some $k+1\leq i\leq n$ and some $\sigma\in\Sigma^i$.
		Then $d\in\idv(s^j,\tau)$ for some $k+1\leq j\leq n$ and some $\tau\in\origin^j$.
	\item\lab{pkt:origin2}
		Let $c=(p,u^n:u^{n-1}:\dots:u^k)$ be a configuration such that $u^k=s^k$ and $\origin^i\subseteq\type(u^i)$ for $k+1\leq i\leq n$.
		Assume that $d\in\idv(u^j,\tau)$ for some $k+1\leq j\leq n$ and some $\tau\in\origin^j(R,\Sigma^n,\Sigma^{n-1},\dots,\Sigma^{k+1})$.
		Then there exists a normalized $k$-upper run $S$ from $c$ to a configuration $(q,v^n:v^{n-1}:\dots:v^k)$ such that 
		$\varphi(S)=\varphi(R)$,
		and $v^k=t^k$,
		and $\Sigma^i\subseteq\type(v^i)$ for $k+1\leq i\leq n$,
		and either $S$ reads $d$ or 
		$d\in\idv(v^i,\sigma)$ for some $k+1\leq i\leq n$ and some $\sigma\in\Sigma^i$.
	\end{enumerate}
\end{lemma}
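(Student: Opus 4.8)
The plan is to define the sets $\origin^j$ and to prove both parts of the lemma \emph{simultaneously}, by induction on the structure of the normalized $k$-upper run $R$ following the decomposition of Proposition~\ref{prop:return-bis}. Part~\ref{pkt:origin1} and part~\ref{pkt:origin2} are two directions of one statement: part~\ref{pkt:origin1} pulls the ``$d$ is important'' information \emph{backwards} through $R$, from the final stacks $t^i$ to the initial stacks $s^j$, while part~\ref{pkt:origin2} replays a run \emph{forwards} from a modified initial configuration, witnessing that the pulled-back information does produce an important data value again at the end. Keeping the two directions together is essential, because the recursive cases feed the forward direction of one subrun into the backward direction of another.

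The two easy cases come first. If $R$ uses only operations of level at most $k$ (case~1 of Proposition~\ref{prop:return-bis}), then $s^i=t^i$ for all $k+1\leq i\leq n$, and every operation depends only on, and modifies only, the topmost $k$-stack; hence I set $\origin^j=\Sigma^j$. Part~\ref{pkt:origin1} is then immediate, and for part~\ref{pkt:origin2} I simply replay the identical sequence of operations from $c$, which is legal because $u^k=s^k$ and the higher stacks are never read, reads the same word, and leaves the higher stacks untouched. If $R=R_1\circ R_2$ is a composition of two nonempty $k$-upper runs (case~4), with intermediate stack $\bar s^n:\dots:\bar s^k$, I define $\origin^j(R,\Sigma^n,\dots,\Sigma^{k+1})$ by first pulling the family $\Sigma^\bullet$ back through $R_2$, obtaining sets $\origin^\bullet(R_2,\Sigma^\bullet)\subseteq\type(\bar s^\bullet)$, and then pulling those back through $R_1$. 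Part~\ref{pkt:origin1} chains the two backward implications; part~\ref{pkt:origin2} chains the two forward constructions into $S=S_1\circ S_2$, the only delicate point being the disjunction ``reads $d$ or keeps $d$ important'': if $S_1$ already reads $d$ I continue with any valid $S_2$, and otherwise the $\idv$-membership guaranteed at the intermediate configuration supplies exactly the hypothesis needed to invoke part~\ref{pkt:origin2} for $R_2$.

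The substance of the proof is in the two remaining cases, where $R$ begins with $\push r$ for some $r\geq k+1$. When $R$ is a single $\push r$ (case~2), the final stacks satisfy $t^i=s^i$ for $i>r$, while at level $r$ the topmost $r$-stack acquires one additional $(r-1)$-stack, and for $k\leq i\leq r-1$ the stack $t^i$ is a copy of $s^i$ whose topmost $0$-stack has been replaced by $(\beta,0)$ or $(\beta,\nodata)$. To define $\origin$ I unwind the membership $\Sigma^r\subseteq\type(t^r)$ through the composer of Definition~\ref{def:types-k}: each $\sigma\in\Sigma^r$ comes from a composer $(\Phi^r_\sigma,\Phi^{r-1}_\sigma;\{\sigma\})$ with $\Phi^r_\sigma\subseteq\type(s^r)$ (as $t^r$ merely extends $s^r$ by one $(r-1)$-stack), so I let $\origin^r$ collect the $\Phi^r_\sigma$ and route the components $\Phi^{r-1}_\sigma$ recursively down through levels $r-1,\dots,k+1$, accounting at each step for the changed topmost $0$-stack. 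The observation that makes this sound is that the new topmost $0$-stack carries only the data value $0$ or $\nodata$; since we care exclusively about $d\in\DV\setminus\{0\}$ not appearing in $s^k$, this cell never contributes $d$ to any $\idv$ set, so all $d$-relevant $\idv$-content of the $t^i$ is already captured by the composer components living in the types of the $s^i$. The case where $R$ is $\push r$ followed by an $r$-return (case~3) combines this unwinding of the push with Lemma~\ref{lem:idv}: the return is precisely an $r$-return from the post-push configuration, so Lemma~\ref{lem:idv} characterizes exactly which data values it reads or leaves important in terms of the run descriptors in the type of the copied $(r-1)$-stack, and this is what I use both to populate $\origin$ and to build the witnessing run of part~\ref{pkt:origin2}.

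I expect the case~3 analysis to be the main obstacle. There the push first \emph{duplicates} the topmost $(r-1)$-stack, so its type content must be split via composers, and then the $r$-return \emph{consumes} a copy, so the surviving content must be recombined via Lemma~\ref{lem:idv}; getting these two movements to cancel, so that $\origin^j$ captures exactly those $d\in\idv(s^j,\tau)$ that force $d$ back into some $\idv(t^i,\sigma)$ with $\sigma\in\Sigma^i$ (and conversely for part~\ref{pkt:origin2}), is where the bookkeeping is heaviest. A secondary point, recurring in every case, is to confirm that the constructed forward run $S$ is again normalized and $k$-upper and satisfies $\varphi(S)=\varphi(R)$; this follows because each building block is obtained from a run descriptor through Lemmas~\ref{lem:run2type} and~\ref{lem:idv}, whose notion of agreement records exactly the state, the monoid value, and the level of the run.
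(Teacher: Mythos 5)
Your proposal matches the paper's own proof in all essentials: you define the $\origin$ sets by induction on the structure of $R$ following the four cases of Proposition~\ref{prop:return-bis}, handle case~2 by unwinding composers from Definition~\ref{def:types-k}, case~3 by combining composers with Lemma~\ref{lem:idv} (and Lemma~\ref{lem:run2type}), and cases~1 and~4 by replaying and chaining, with the same key soundness observation that the freshly pushed $0$-stack holds only $0$ or $\nodata$ and so can never contribute $d$. The only difference is organizational --- the paper defines $\origin$ up front and then proves the two points by two separate inductions rather than one simultaneous one (your claim that interleaving them is \emph{essential} is not actually borne out, since neither direction's inductive step invokes the other) --- but this does not change the argument.
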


Now we come to the definition of the $\origin$ sets.
We define them by induction on the length of $R$,
and we make a case distinction on the form of $R$ as in Proposition \ref{prop:return-bis}.
\begin{enumerate}
\item
	Assume that $R$ performs only operations of level at most $k$.
	Then we take $\origin^i=\Sigma^i$ for $k+1\leq i\leq n$.
\item
	Assume that $|R|=1$ and the operation performed by $R$ is $\mathsf{push}^r$ for $r\geq k+1$.
	Then we define $\origin^i$ as the smallest sets satisfying the following:
	\begin{itemize}
	\item	$\Sigma^i\subseteq\origin^i$ for $k+1\leq i\leq n$, $i\not=r$, and
	\item	for every composer $(\Psi^r,\Psi^{r-1},\dots,\Psi^k;\{\sigma\})$ such that $\sigma\in\Sigma^r$ and $\Psi^i\subseteq\type(s^i)$ for $k\leq i\leq r$ it holds $\Psi^i\subseteq\origin^i$ for $k+1\leq i\leq r$.
	\end{itemize}
\item
	Assume that the first operation of $R$ is $\mathsf{push}^r$, and $\subrun{R}{1}{|R|}$ is an $r$-return, where $r\geq k+1$.
	Let $\widehat\rho=(\varphi(\subrun{R}{1}{|R|}),r,\Sigma^n,\Sigma^{n-1},\dots,\Sigma^{r+1},q)$, where $q$ is the state of $R(|R|)$, and let $\overline s^k$ be the topmost
	$k$-stack of $R(1)$.
	We define $\origin^i$ as the smallest sets satisfying the following:
	\begin{itemize}
	\item	$\Sigma^i\subseteq\origin^i$ for $k+1\leq i\leq r$, and
	\item	for every run descriptor $\rho=(\Phi^n,\Phi^{n-1},\dots,\Phi^{k+1},q_1,\widehat\rho)\in\type(\overline s^k)$ such that $\Phi^i\subseteq\type(s^i)$ for $k+1\leq i\leq n$, $i\not=r$, and $\Phi^r\subseteq\type(s^r:s^{r-1}:\dots:s^k)$,
		it holds $\Phi^i\subseteq\origin^i$ for $k+1\leq i\leq n$, $i\not=r$, and
	\item	for every run descriptor $\rho=(\Phi^n,\Phi^{n-1},\dots,\Phi^{k+1},q_1,\widehat\rho)\in\type(\overline s^k)$ such that $\Phi^i\subseteq\type(s^i)$ for $k+1\leq i\leq n$, $i\not=r$, and $\Phi^r\subseteq\type(s^r:s^{r-1}:\dots:s^k)$,
		and for every composer $(\Psi^r,\Psi^{r-1},\dots,$ $\Psi^k;\{\lambda\})$ such that $\lambda\in\Phi^r$ and $\Psi^i\subseteq\type(s^i)$ for $k\leq i\leq r$ it holds $\Psi^i\subseteq\origin^i$ for $k+1\leq i\leq r$.
	\end{itemize}
\item
	Assume that $R=R_1\circ R_2$ for shorter $k$-upper runs $R_1,R_2$ (if there are multiple choices of $R_1,R_2$, we just fix any of them), and case 1 does not hold.
	The $\origin^i$ sets for $R_1$ and for $R_2$ are already defined by induction.
	Denote $\Phi^i=\origin^i(R_2,\Sigma^n,\Sigma^{n-1},\dots,\Sigma^{k+1})$ for $k+1\leq i\leq n$.
	We define $\origin^i:=\origin^i(R_1,\Phi^n,\Phi^{n-1},\dots,\Phi^{k+1})$ for $k+1\leq i\leq n$.
\end{enumerate}

The proof of Lemma \ref{lem:origin} is tedious but again follows easily from the definition.
In the third case we use Lemma \ref{lem:idv}.
Notice that point 1 of Lemma \ref{lem:idv} tells us only that a run exists, so we need a similar statement in point 2 of our lemma.
We finish by a proof of Lemma \ref{lem:idv-upper}.

\begin{proof}[Proof of Lemma \ref{lem:idv-upper}]
	The part that $d$ and $d'$ do not appear in $t^k$ is immediate, as by definition of a $k$-upper run, $t^k$ is a (possibly modified) copy of $s^k$.
	Because $s^k$ does not contain $d$ nor $d'$, and $R$ is normalized, they cannot appear on this stack.
	
	Assume that $d\in\idv(t^i,\sigma)$ for some $k+1\leq i\leq n$ and some $\sigma\in\type(t^i)$.
	Take $\Sigma^l=\emptyset$ for $k+1\leq l\leq n$, $l\not=i$, and $\Sigma^i=\{\sigma\}$.
	By Lemma \ref{lem:origin}.\ref{pkt:origin1} $d\in\idv(s^j,\tau)$ for some $k+1\leq j\leq n$ and some $\tau\in\origin^j$.
	By assumption also $d'\in\idv(s^j,\tau)$.
	Now we use Lemma \ref{lem:origin}.\ref{pkt:origin2} for configuration $c=R(0)$.
	We obtain a run $S$, which ends in the same state as $R$, and such that $\varphi(S)=\varphi(R)$; this means that $R=S$.
	The lemma implies that $d'\in\idv(t^l,\sigma')$ for some $k+1\leq l\leq n$ and some $\sigma'\in\Sigma^l$.
	As only $\Sigma^i$ is nonempty and contains only $\sigma$, we obtain $d'\in\idv(t^i,\sigma)$.
	The opposite implication is symmetric.
\end{proof}

\section{Why $U$ Cannot Be Recognized?}\lab{sec:final}

In this section we prove that language $U$ cannot be recognized by a deterministic higher-order pushdown automaton with data of any level.
We base on the techniques developed in the previous sections.

Of course our proof goes by contradiction: assume that for some $n$ we have an $(n-1)$-HOPAD recognizing $U$.
We construct an $n$-HOPAD $\Aa$ which works as follows.
First it makes a $\mathsf{push}^n$ operation.
Then it simulates the $(n-1)$-HOPAD (not using the $\mathsf{push}^n$ and $\mathsf{pop}^n$ operations).
When the $(n-1)$-HOPAD is going to accept, $\Aa$ makes the $\mathsf{pop}^n$ operation and afterwards accepts.
Clearly, $\Aa$ recognizes $U$ as well.
Such normalization allows us to use Lemmas \ref{lem:run2type} and \ref{lem:idv}, as in $\Aa$ every accepting run is an $n$-return.
Moreover, we assume that to each state lead either only $\varepsilon$-transitions, or only non-$\varepsilon$-transitions (it can be easily achieved by having two copies of each state).

As already mentioned, let $\varphi\colon A^*\to M$ be a morphism into a finite monoid which tests whether the word is empty, or is a single closing bracket, or begins by a dollar, or is any other word.
This morphism is used with the definition of types and important data values in the previous section.

Let $N=2+2^M$, where $M=\sum_{i=0}^n|\Tt^i|$.
Consider the following words:
\begin{align*}
	w_0=[][\qquad\qquad
	w_{k+1}=w_k^N]^N[\qquad\mbox{for }0\leq k\leq n-1,
\end{align*}
where the number in the superscript (in this case $N$) denotes the number of repetitions of a word.
Let $\overline w_n$ be a data word in which every data value is different and is not equal to $0$, and such that $\pi(\overline w_n)=w_n$.

Let $R$ be the unique run from the initial configuration which reads $\overline w_n$, and ends just after reading its last letter.
Abusing slightly the notation, for any prefixes $v,w$ of $w_k$, we define $R(v)$ to be the configuration of $R$ just after reading $|v|$ letters,
and $\Rz{v,w}$ to be the subrun of $R$ between $R(v)$ and $R(w)$.
It will be important to analyze relations between these configurations $R(v)$ for different prefixes.

Notice that every data value read by $R$ appears somewhere in a stack of $R(v)$.
This is true, because after reading $[\$$, the automaton has to check in particular whether the data values present later on the input are the same as those already read by $R$.
However this contradicts with the following key lemma (taken for $k=n$ and $u=\varepsilon$).

\begin{lemma}
	Let $0\leq k\leq n$, and let $u$ be a word such that $uw_k$ is a prefix of $w_n$.
	Then some data value read by $\Rz{u,uw_k}$ does not appear in the topmost $k$-stack of $R(uw_k)$.
\end{lemma}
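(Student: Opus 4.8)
The plan is to prove this by induction on $k$, following the structure of the words $w_k$. The key idea is that $w_k$ is built by concatenating $N$ copies of $w_{k-1}$ (interspersed with brackets), and the final run over $uw_k$ decomposes into pieces that are $k$-upper runs and $(k)$-returns, to which we can apply Lemma \ref{lem:idv-upper} and Lemma \ref{lem:idv}. The base case $k=0$ (where $w_0=[][$) is a direct calculation: the subrun $\Rz{u,u w_0}$ reads three bracket letters with distinct data values, and since the topmost $0$-stack of $R(uw_0)$ can store at most one data value, at least one of the three read data values does not appear there.

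For the inductive step, I would analyze how $\Rz{u,uw_k}$ processes the $N$ blocks $w_{k-1}$. First I would argue that, because $N=2+2^M$ exceeds the number of possible ``profiles'' a data value can have (namely the pattern recording, for each $k$-stack component $s^i$ and each $\sigma\in\type(s^i)$, whether $d\in\idv(s^i,\sigma)$, of which there are at most $2^M$), two of the blocks must behave indistinguishably from the point of view of the $\idv$ sets. The main technical work is to set this up carefully: I would track, for each copy of $w_{k-1}$, a data value whose existence is guaranteed by the inductive hypothesis (a data value read inside that block but not appearing in the appropriate topmost stack), and then use the pigeonhole principle over the $2^M$ possible $\idv$-profiles to find two blocks whose ``witness'' data values are indistinguishable at the relevant configuration.

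The heart of the argument is then a transfer step. Having two indistinguishable data values $d,d'$ at some intermediate configuration, I would apply Lemma \ref{lem:idv-upper} to the $k$-upper run that carries the stack forward from that configuration to $R(uw_k)$, concluding that $d$ and $d'$ remain indistinguishable, and in particular that neither appears in the topmost $k$-stack (since by the construction one of them was chosen to be absent from the relevant component). The determinism hypothesis built into Lemma \ref{lem:idv-upper}---that no other run from the same start reaches the same state with the same $\varphi$-value---is available precisely because $\Aa$ is deterministic and we have normalized it so accepting runs are $n$-returns. The conclusion is that one of these witness data values, which was read by $\Rz{u,uw_k}$ inside some block, fails to appear in the topmost $k$-stack of $R(uw_k)$, as required.

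I expect the main obstacle to be the bookkeeping of which stack component each witness data value should be absent from, and correctly matching the decomposition of $\Rz{u,uw_k}$ into upper runs and returns with the block structure of $w_k$. In particular, one must be careful that when the inductive hypothesis is applied to an inner block $w_{k-1}$, it speaks about the topmost $(k-1)$-stack at the block's end, whereas the final claim is about the topmost $k$-stack of $R(uw_k)$; bridging this gap requires tracking how the $(k-1)$-stack is embedded and preserved (or transferred via $\origin$ sets) across the later blocks. Ensuring that the chosen data values genuinely remain inaccessible---rather than merely being moved to a different, still-present location---is where the indistinguishability argument via Lemma \ref{lem:idv-upper} does the essential work.
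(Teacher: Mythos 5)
Your proposal reproduces the base case and two ingredients of the paper's inductive step (the pigeonhole over $\idv$-profiles, which is why $N=2+2^M$, and the transfer of indistinguishability via Lemma \ref{lem:idv-upper}), but it omits the two ideas that actually make the proof work, and its final inference does not follow. The paper's inductive step begins with a case distinction. If for some nonempty prefix $v$ of $w_k$ the run $\Rz{uv,uw_k}$ is \emph{not} $(k-1)$-upper, then the data value read by the last operation before $R(uv)$ is absent from the topmost $k$-stack of $R(uw_k)$ by a purely mechanical stack-copying argument (either the whole topmost $k$-stack of $R(uw_k)$ is not a copy of that of $R(uv)$, or all copies of its topmost $(k-1)$-stack were removed); this case is the \emph{only} source of the witness data value the lemma asserts. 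The opposite case --- every $\Rz{uv,uw_k}$ is $(k-1)$-upper --- is shown to be impossible, and it is only inside that case that pigeonhole and indistinguishability appear; moreover the $(k-1)$-upper assumption (property $\heartsuit$) is precisely what allows the witnesses $d_i$ supplied by the induction hypothesis to survive to the configuration $R(uw_{k-1}^N)$ where the pigeonhole is applied. Skipping this case distinction, your argument has neither a source for the missing data value nor a justification for transporting the $d_i$ forward.

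Second, your concluding step --- that since $d$ and $d'$ are $\idv$-indistinguishable, ``neither appears in the topmost $k$-stack'' --- is a non sequitur. The induction hypothesis gives absence only from the topmost $(k-1)$-stack at the end of a block; the value can perfectly well sit in a deeper $(k-1)$-stack inside the topmost $k$-stack, and Lemma \ref{lem:idv-upper} yields absence from $t^k$ only under the hypothesis that the value was already absent from $s^k$ and unread, which is not available here. In the paper, the indistinguishability of $d_x$ and $d_y$ serves an entirely different purpose: one feeds the automaton the continuation $]^{N-x+1}$ followed by a dollar, and the definition of $U$ forces the unique accepting run from $R(uw_{k-1}^N]^{N-x+1})$ to read $d_x$ but not $d_y$, whereas the companion lemma (proved from Lemmas \ref{lem:idv-upper}, \ref{lem:run2type} and \ref{lem:idv}) shows that this run must read both or neither --- a contradiction that rules out the second case altogether. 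Your proposal never invokes the dollar continuation or the defining property of $U$ in the inductive step, so it has no way to reach this contradiction, and without it the argument cannot be completed.
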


\begin{proof}
	The proof is by induction on $k$.
	For $k=0$ this is obvious, as three data values are read, but the topmost $0$-stack can store only one of them.

	Let now $k\geq 1$.
	Assume first that for some nonempty prefix $v$ of $w_k$ the run $\Rz{uv,uw_k}$ is not $(k-1)$-upper.
	Let us look at the data value read by the last operation before $R(uv)$.
	This data value appears only once on the stack of $R(uv)$, in the topmost $0$-stack.
	One possibility is that $\Rz{uv,uw_k}$ is not $k$-upper.
	By definition this means that the topmost $k$-stack of $R(uw_k)$ is not a copy of the topmost $k$-stack of $R(uv)$, thus it cannot contain this data value.
	The opposite possibility is that $\Rz{uv,uw_k}$ is $k$-upper, but not $(k-1)$-upper.
	This means that the topmost $k$-stack of $R(uw_k)$ is obtained as a copy of the topmost $k$-stack of $R(uv)$,
	however the topmost $(k-1)$-stack of this $k$-stack is not a copy of the topmost $(k-1)$-stack of $R(uv)$.
	The only way how this could happen is that all copies of the topmost $(k-1)$-stack of $R(uv)$ were removed from this $k$-stack.
	Then also this $k$-stack does not contain the considered data value.
	
	For the rest of the proof assume the opposite: that for each nonempty prefix $v$ of $w_k$ the run $\Rz{uv,uw_k}$ is $(k-1)$-upper.
	We will obtain a contradiction, which will prove that this is in fact impossible.
	From this assumption we get the following property $\heartsuit$.
	\begin{quote}
		Let $v'$ be a prefix of $w_k$, and $v$ a nonempty prefix of $v'$.
		Then $\Rz{uv,uv'}$ is $(k-1)$-upper.
	\end{quote}
	Indeed, if the topmost $(k-1)$-stack of $R(uw_k)$ is a copy of the topmost $(k-1)$-stack of $R(uv')$ and of the topmost $(k-1)$-stack of $R(uv)$,
	then for sure the topmost $(k-1)$-stack of $R(uv')$ is a copy of the topmost $(k-1)$-stack of $R(uv)$.
	
	From the induction assumption (where $uw_{k-1}^{i-1}$ is taken as $u$), 
	for each $2\leq i\leq N$ there exists a data value $d_i$ read by $\Rz{uw_{k-1}^{i-1},uw_{k-1}^i}$ which does not appear in the topmost $(k-1)$-stack of $R(uw_{k-1}^i)$.
	As $\Rz{uw_{k-1}^i,uw_{k-1}^N}$ is $(k-1)$-upper (property $\heartsuit$), we know that $d_i$ does not appear in the topmost $(k-1)$-stack of $R(uw_{k-1}^N)$ as well.
	
	Denote the stack of $R(uw_{k-1}^N)$ as $s^n:s^{n-1}:\dots:s^{k-1}$.
	Since there are more possible indices $i\in\{2,3,\dots,N\}$ than subsets of $\bigcup_{i=k}^n\type(s^i)$,
	there have to exist two indices $2\leq x<y\leq N$ such that 
	for each $k\leq i\leq n$ and each $\tau\in\type(s^i)$ it holds 
	$d_x\in\idv(s^i,\tau)\Iff d_y\in\idv(s^i,\tau)$.
	Let $r=N-x+1$.
	Let us consider the unique accepting run $S$ from $R(uw_{k-1}^N]^r)$ whose first read letter is a dollar
	(this run is unique because the $U$ language determines the rest of the word after a dollar).
	Notice that the last opening bracket which is not closed before the dollar is the last bracket of the $x$-th $w_{k-1}$ after $u$.
	Thus by definition of $U$, $S$ reads $d_x$ and does not read $d_y$.
	This contradicts with the thesis of the next lemma (where $k-1$ is used as $k$).
\end{proof}

\begin{lemma}
	Let $0\leq k\leq n$, and let $u$ be a word and $r$ a number such that $u]^r$ is a prefix of $w_n$.
	Assume that $\Rz{u]^{i-1},u]^i}$ is $k$-upper for each $1\leq i\leq r$.
	Denote the stack of $R(u)$ as $s^n:s^{n-1}:\dots:s^k$.
	Let also $d,d'$ be data values which are not read by $\Rz{u,u]^r}$, which does not appear in $s^k$, and such that for each $k+1\leq i\leq n$ and each $\tau\in\type(s^i)$ it holds
        $d\in\idv(s^i,\tau)\Iff d'\in\idv(s^i,\tau)$.
        Then the unique accepting run from $R(u]^r)$ whose first read letter is a dollar, either reads both $d$ and $d'$, or none of them.
\end{lemma}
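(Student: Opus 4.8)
The plan is to reduce the statement to the transfer property of Lemma~\ref{lem:idv-upper} together with Lemma~\ref{lem:idv} for the accepting run, processing the block $]^r$ one bracket at a time. First I would note that $\Rz{u,u]^r}$ is $k$-upper, being the composition of the single-bracket runs $\Rz{u]^{i-1},u]^i}$, each $k$-upper by assumption (Proposition~\ref{prop:return-bis}). The crucial design decision is \emph{not} to apply Lemma~\ref{lem:idv-upper} to the whole block, but to each $\Rz{u]^{i-1},u]^i}$ in turn: a single-bracket run reads exactly one $]$, so its image under $\varphi$ is the monoid element ``single closing bracket'', whose $\varphi$-class is the singleton $\{]\}$. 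Hence any normalized run from $R(u]^{i-1})$ with this $\varphi$-value reads precisely the one-letter word $]$ and, $\Aa$ being deterministic, is pinned down by its final state---which is exactly the uniqueness hypothesis of Lemma~\ref{lem:idv-upper}. For the whole block this would fail, since $\varphi(]^r)$ for $r\ge 2$ is only ``any other word''.

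Before applying the lemma I would settle normalization. Since the data values of $\overline w_n$ are pairwise distinct, each $\Rz{u]^{i-1},u]^i}$ must read its bracket by a $\push{}$ (a $\pop{}$ could never match a value that occurs only once on the input, as that value was never stored before), so it is not literally normalized; I would replace the value it reads by $0$. As every $\pop{}$ internal to such a short run is an $\varepsilon$-move and compares nothing, this yields a genuine normalized run with the same states and operations, and it affects neither the types of the resulting stacks nor the $\idv$-status of $d,d'$ (which differ from $0$ and are never read). I would then prove by induction on $i$ that $d,d'$ do not occur in the topmost $k$-stack of $R(u]^i)$ and that they remain indistinguishable by the $\idv$ sets at $R(u]^i)$ (in the precise sense of the hypothesis, now read at $R(u]^i)$). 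The base case is the hypothesis, and each inductive step is a single application of Lemma~\ref{lem:idv-upper} to $\Rz{u]^{i-1},u]^i}$, whose remaining hypotheses hold by the induction hypothesis and because $d,d'$ are not read by $\Rz{u,u]^r}$. After $r$ steps I obtain the same indistinguishability at $R(u]^r)$; write its stack as $t^n:t^{n-1}:\dots:t^k$.

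Finally I would bring in the accepting run $S$ from $R(u]^r)$ whose first letter is a dollar. By the $\push n/\pop n$ normalization of $\Aa$ it is an $n$-return ending in an accepting state $q_{\mathrm{acc}}$, and $\varphi(S)$ is the class ``begins with a dollar''. Applying Lemma~\ref{lem:idv} with $r=n$ and $\widehat\sigma=(\varphi(S),n,q_{\mathrm{acc}})$, and noting that for $r=n$ there are no sets $\Sigma^i$, the first condition of the lemma collapses to ``$S$ reads $d$''. Thus $S$ reads $d$ iff $\type(t^k)$ carries a descriptor $\sigma=(\Psi^n,\dots,\Psi^{k+1},q,\widehat\sigma)$ with $\Psi^i\subseteq\type(t^i)$ and $d\in\idv(t^i,\tau)$ for some $i>k$ and $\tau\in\Psi^i$---the alternative $d\in\idv(t^k,\sigma)$ being excluded because $d$ is absent from $t^k$. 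The indistinguishability established above makes this $\idv$-condition hold for $d$ exactly when it holds for $d'$, so by the converse direction of Lemma~\ref{lem:idv} there is a normalized $n$-return from $R(u]^r)$ agreeing with $\widehat\sigma$ that reads $d'$. This run begins with a dollar and is accepting, hence by uniqueness of the accepting run (the language $U$ fixes the whole suffix after the dollar) it coincides with $S$; therefore $S$ reads $d'$. The converse is symmetric, so $S$ reads both $d$ and $d'$ or neither.

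I expect the main obstacle to be precisely the uniqueness required by Lemma~\ref{lem:idv-upper}: it is what forces the one-bracket-at-a-time strategy, because only for the singleton class of a single closing bracket is the run determined. The accompanying bookkeeping for normalization---checking that substituting $0$ for the read values keeps each short run well defined and preserves the $\idv$-status of $d$ and $d'$---is routine but must be stated with care, precisely because $d,d'$ are never read and the internal $\pop{}$'s of these runs compare nothing.
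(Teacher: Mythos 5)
Your proposal is correct and follows essentially the same route as the paper's own proof: the same $0$-normalization twist for the closing-bracket data values, the same one-bracket-at-a-time application of Lemma~\ref{lem:idv-upper} (justified by the singleton $\varphi$-class of a single closing bracket), and the same final argument via Lemmas~\ref{lem:run2type} and~\ref{lem:idv} applied to the accepting $n$-return, with uniqueness of the dollar-initiated accepting run forcing $S'=S$. The only point you gloss over is that ``pinned down by its final state'' needs, besides determinism, the paper's standing assumption that each state is reached either only by $\varepsilon$-transitions or only by non-$\varepsilon$-transitions (otherwise an $\varepsilon$-extension could revisit the same state), which is exactly what the paper invokes there.
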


\begin{proof}
	Originally we have assumed that every data value read by $R$ is different.
	Let us now make a small twist, and assume that the data values appearing with the $r$ closing brackets after $u$ are all equal to $0$.
	This does not affect the statement of the lemma, as the changed data values does not appear in the stack of $R(u)$ nor these are $d$ or $d'$.
	Thanks to this change, the runs $\Rz{u]^{i-1},u]^i}$ are normalized for $1\leq i\leq r$.
	Notice also that $\Rz{u]^{i-1},u]^i}$ is the only normalized run from $R(u]^{i-1})$ which ends in the same state as $\Rz{u]^{i-1},u]^i}$, and reads a single closing bracket
	(recall that equality of states causes that such a run ends just after reading a letter).
	Thus we can apply Lemma \ref{lem:idv-upper} for the run $\Rz{u]^{i-1},u]^i}$ and for data values $d,d'$.
	Denote the stack of $R(u]^r)$ as $t^n:t^{n-1}:\dots:t^k$.
	After applying this lemma consecutively for $i=1,2,\dots,r$, we obtain that $d,d'$ does not appear in $t^k$, and that for each $k+1\leq i\leq n$ and each $\tau\in\type(t^i)$ it holds
	$d\in\idv(t^i,\tau)\Iff d'\in\idv(t^i,\tau)$.
	
	Now consider the unique accepting run from $R(u]^r)$ whose first read letter is a dollar.
	We see that it is normalized (it does not read any new data values),
	and that this is an $n$-return (as every accepting run of $\Aa$).
	It agrees with $\widehat\sigma=(\varphi(S),n,q)$ for some accepting state $q$.
	Lemma \ref{lem:run2type} implies that $\type(t^k)$ contains a run descriptor $\sigma=(\Phi^n,\Phi^{n-1},\dots,\Phi^{k+1},p,\widehat\sigma)$ such that $\Phi^i\subseteq\type(t^i)$ for $k+1\leq i\leq n$.
	Assume that $S$ reads one of $d,d'$, say $d$.
	Then implication 1$\Rightarrow$2 of Lemma \ref{lem:idv} says that 
	$d\in\idv(t^i,\tau)$ for some $k+1\leq i\leq n$ and some $\tau\in\Phi^i$
	(the lemma allows also that $d\in\idv(t^k,\sigma)$, but this is impossible as $d$ does not appear in $t^k$).
	However in such case also $d'\in\idv(t^i,\tau)$.
	Using now the opposite implication of Lemma \ref{lem:idv} we obtain a normalized run $S'$ from $S(0)$ which agrees with $\widehat\sigma$ and reads $d'$.
	But $S'$ is also accepting and its first letter read is a dollar, so $S'=S$.
\end{proof}

%

\bibliographystyle{eptcs} \bibliography{bib}
\end{document}